\newtheorem{thm}{Theorem}
\newtheorem{lem}[thm]{Lemma}
\newtheorem{prop}[thm]{Proposition}
\renewcommand{\thesubsection}{\thesection.\arabic{subsection}}
\renewcommand{\p@subsection}{}
\renewcommand{\p@subsubsection}{}
\DeclarePairedDelimiter{\ket}{\lvert}{\rangle}
\DeclarePairedDelimiter{\bra}{\langle}{\rvert}
\newcommand{\ii}[0]{\mathrm{i}}
\newcommand{\lr}[1]{\left( #1\right)}
\newcommand{\mlr}[1]{\left[ #1\right]}
\newcommand{\alr}[1]{\left\langle #1\right\rangle}
\newcommand{\norm}[1]{\left\lVert#1\right\rVert}
\newcommand{\abs}[1]{\left\lvert#1\right\rvert}
\newcommand{\ee}{\mathrm{e}}
\newcommand{\dd}{\mathrm{d}}
\newcommand{\kpsi}{\ket{\psi}}
\newcommand{\kpsin}[1]{\ket{\psi_{#1}}}
\newcommand{\bz}{{\mathbf{z}}}
\newcommand{\bs}{{\mathbf{s}}}
\renewcommand\onecolumngrid{
\do@columngrid{one}{\@ne}%
\def\set@footnotewidth{\onecolumngrid}
\def\footnoterule{\kern-6pt\hrule width 1.5in\kern6pt}%
}
\renewcommand\twocolumngrid{
        \def\footnoterule{
        \dimen@\skip\footins\divide\dimen@\thr@@
        \kern-\dimen@\hrule width.5in\kern\dimen@}
        \do@columngrid{mlt}{\tw@}
}%
\newcolumntype{Y}{>{\centering\arraybackslash}X}
\newcolumntype{M}[1]{>{\centering\arraybackslash}m{#1}}
\newcommand{\comment}[1]{}
\newcommand{\revise}[1]{#1}
\begin{document}
\title{Eigenstate localization in a many-body quantum system}

\author{Chao Yin}
\affiliation{Department of Physics and Center for Theory of Quantum Matter, University of Colorado, Boulder, CO 80309, USA}

\author{Rahul Nandkishore}
\affiliation{Department of Physics and Center for Theory of Quantum Matter, University of Colorado, Boulder, CO 80309, USA}

\author{Andrew Lucas}
\email{andrew.j.lucas@colorado.edu}
\affiliation{Department of Physics and Center for Theory of Quantum Matter, University of Colorado, Boulder, CO 80309, USA}

\begin{abstract}
We prove the existence of extensive many-body Hamiltonians with few-body interactions and a many-body mobility edge: all eigenstates below a nonzero energy density are localized in an exponentially small fraction of ``energetically allowed configurations" within Hilbert space.  Our construction is based on quantum perturbations to a classical low-density parity check code.  In principle, it is possible to detect this eigenstate localization   by measuring few-body correlation functions in efficiently preparable mixed states.


\end{abstract}

\date{\today}

 \maketitle

\emph{Introduction}.--- Statistical mechanics is the framework that encapsulates how complex many-body systems can be described by simple emergent models, such as hydrodynamics.  It assumes ergodicity: a system explores all ``energetically allowed configurations" with equal probability during its time evolution.

There is an old paradox associated with ergodicity in many-body quantum systems.  States obey the Schr\"odinger equation: setting Planck's constant $\hbar=1$, \begin{equation}\label{eq:schrodinger}
    \frac{\mathrm{d}}{\mathrm{d}t} |\psi(t)\rangle = -\mathrm{i}H|\psi(t)\rangle.
\end{equation}
We formally solve (\ref{eq:schrodinger}) by diagonalizing the matrix $H$. If we are handed an eigenvector $|\psi_0\rangle$ obeying $H|\psi_0\rangle = E|\psi_0\rangle $, then its time evolution is trivial:  $|\psi_0(t)\rangle = \mathrm{e}^{-\mathrm{i}Et}|\psi_0\rangle$.  Therefore, no physical observable evolves in time.  This seems to contradict ergodicity outright. This paradox is resolved by the eigenstate thermalization hypothesis (ETH), which states that $|\psi_0\rangle$ itself must appear thermal: for any few-body observable $A$, \cite{deutsch,srednicki}
\begin{equation}
    \langle \psi_0 | A |\psi_0\rangle \approx \frac{\mathrm{tr}\left(\mathrm{e}^{-\beta H} A\right)}{\mathrm{tr}\left(\mathrm{e}^{-\beta H}\right)}. \label{eq:ergodic}
\end{equation}
Here $\beta$ is the inverse temperature associated with the energy $E$ of the eigenstate; the right hand side of (\ref{eq:ergodic}) gives a precise meaning to sampling over ``allowed configurations".  Extensive numerical simulations are consistent with ETH in a wide range of quantum many-body systems \cite{mblarcmp, mblrmp,Deutsch_2018}. 
 
Given the ubiquity of ETH, it is tempting to find a many-body quantum system where the ETH, and in turn the theory of statistical mechanics, fails. Since (\ref{eq:ergodic}) is a statement about \emph{eigenstates} of a system, which are only well-defined if the number of particles $N$ is kept finite, ETH makes sense if we \emph{first consider $t\rightarrow \infty$, then $N\rightarrow \infty$.}   Keeping this order of limits in mind,  it is well known that the ETH can fail at \emph{fine tuned points} in the space of all Hamiltonians.  Systems can be integrable \cite{Baxter} and possess extensive conserved quantities.  They may also have quantum scars, or special eigenstates which violate ETH even when typical eigenstates do obey ETH \cite{scar_PXP18,scar_exact18,scar_exact18_1,scar_rev21,scarsarcmp}. The Hilbert space can also fragment (shatter) into disconnected subsets with no matrix elements allowing for quantum dynamics between these subsets  \cite{KHN, Sala2020, MPNRB,moudgalya2022quantum,algebra_Motrunich22}; this structure can even be robust to exponentially long but finite times when certain emergent symmetries are present  \cite{KHN, SHN, SNH}; see also \cite{word_Lake23}.  Yet all of these mechanisms are believed, in many-body systems, to be non-robust to generic perturbations in the $t\rightarrow \infty$ limit at finite $N$.

In single-particle quantum mechanics, there is a way to obtain robust ergodicity breaking in the limit $t\rightarrow \infty$, known as \emph{Anderson localization} \cite{anderson}.  When a particle hops on a lattice in the presence of disorder, the eigenvalue equation $H|\psi_0\rangle = E|\psi_0\rangle$ can become unsolvable unless $|\psi_0\rangle$ is localized if the energy $E$ is \emph{off resonance} with the local energy scales in $H$ away from an isolated region of space.    Clearly, this localized eigenstate violates ETH in the large system limit: there are many other configurations of similar energy that are inaccessible.

We now turn to the many-body setting.  Consider $N$ interacting qubits, whose Hilbert space is spanned by bitstrings $|\mathbf{x}\rangle$ where $\mathbf{x}\in\mathbb{F}_2^N$, where $\mathbb{F}_2=\lbrace 0,1\rbrace$. Can eigenstates localize in the space of bitstrings $|\mathbf{x}\rangle$? This problem is substantially harder than single-particle Anderson localization: in physical systems, the Hamiltonian $H$ takes the form \begin{equation}\label{eq:Hdecompose}
    H = \sum_{S\subset \lbrace 1,\ldots, N\rbrace : |S| \le q} H_S \otimes I_{S^{\mathrm{c}}},
\end{equation}where $H_S$ acts only on a subset $S$ of at most $q$ degrees of freedom and $I_{S^{\mathrm{c}}}$ is the identity matrix on the remainder. We further demand that each degree of freedom interacts with finitely-many others, so that adjusting any one qubit can only change the energy by an $N$-independent O(1) amount.  Notice that for any given bit string $|\mathbf{x}\rangle$, there are at least $\mathrm{O}(N)$ bitstrings $\mathbf{x}^\prime$ for which $\langle \mathbf{x}|H|\mathbf{x}^\prime\rangle \ne 0$.  As a consequence, it is far from clear that the ``no resonance" condition responsible for Anderson localization can localize a many-body eigenstate.  

Nevertheless, it has been conjectured for almost 70 years  \cite{anderson, AGKL, GMP, Basko_2006, huse2007} that \emph{many-body localization} (MBL) is possible, with disordered quantum spin chains believed to be the most likely setting. However, extensive work \cite{liom,lbits,liom2,drewpotter,serbyn,serbyn2,morningstar,MBL_challenge21,MBL_DEROECK23}  (see \cite{mblarcmp, mblrmp} for reviews), has been unable to give a complete proof of the existence of MBL. There is a long history  \cite{AGKL, tarzia,RRG19,RRGrev21} of attempting to model MBL by ``cartoons" \cite{levvidmar}, in particular single-particle Anderson localization on random graphs \cite{abouchacra,Mirlin_analytic91,Bethe12,Bethe14,RRG16,RRG_finitesize16,RRG19,RRGrev21,RRG23}.  This is not carefully justified on mathematical grounds; the many-body interaction graph has (at least) $\mathrm{O}(N)$ connectivity, many loops, and strong correlations between disorder at different points on the graph.  A rigorous derivation of eigenstate localization must explicitly address all of these challenges. The most formal arguments \cite{Imbrie_2016} rely on a plausible assumption (`no strong level attraction' in the many-body spectrum), yet the ultimate conclusion of MBL has recently been challenged \cite{vidmar, selspolkovnikov}. There is an abundance of evidence, both theoretical \cite{deRoeckHuveneers, GopalakrishnanHuse,preth_MBL23,MBL_resonance_Huse23} and experimental \cite{bloch, Choi_2016, DeMarco}, for MBL as (at least) a prethermal phenomenon, persisting over non-perturbatively large (but finite, in the thermodynamic limit) times. 

Here, we present a family of many-body quantum systems in which every low-energy eigenstate is proved to be localized, and settle the longstanding conjecture that such eigenstate localization is possible.  Amusingly, we do not study disordered spin chains; instead, we use good classical error-correcting codes \cite{Sipser_1996} as the basis for a many-body quantum system with localized eigenstates.   Our construction is related to previous work \cite{baldwin,baldwin2,leschke,Winer:2022ciz}, which has heuristic arguments for similar eigenstate localization in a genuine many-body problem (which we are able to rigorously prove). The authors of \cite{baldwin} described said eigenstate localization as `non-ergodic but not MBL.' 
We do not agree with this distinction. Eigenstate localization in our model occurs within a connected region of Hilbert space,  is robust to perturbations, has deep mathematical analogies to single-particle Anderson localization in three dimensions, and involves a many-body mobility edge at non-zero energy density, just as in the original works on MBL \cite{GMP, Basko_2006}; see also \cite{warzel}. As such, we believe it makes sense to call it MBL, although of a different kind than postulated in one dimension.

\emph{Classical error correcting codes}.--- To explain our construction, we must first review the theory of classical binary linear error correcting codes, which store $K$ logical bits in $N>K$ physical bits.  Of the possible $2^N$ physical bitstrings $\mathbf{x}\in\mathbb{F}_2^N$, $2^K$ of them correspond to logical \emph{codewords}.  The code distance $D$ is defined to be the smallest nonzero Hamming weight (number of 1s in the bitstring) of a codeword $\mathbf{z}$, 
  denoted as $|\mathbf{z}|$.  In a linear code, the codewords are the right null vectors of the parity check matrix $\mathsf{H} \in \mathbb{F}_2^{M\times N}$\revise{, where $M$ is the number of parity checks}; the right null space thus has dimension $K$.  Notice that one codeword is guaranteed to be $\mathbf{x}=\mathbf{0}$. Of interest are low-density parity check (LDPC) codes \cite{Sipser_1996}, where $\mathsf{H}$ is sparse: each row and column has at most $q=\mathrm{O}(1)$ 1s.  

A very useful type of LDPC code called a c3LTC has recently been constructed \cite{Panteleev_2021,dinur2022good,dinur2022LTC,lin2022}, for which $q$ is O(1), $D=\mathrm{O}(N)$, $K=\mathrm{O}(N)$, and we have a valuable property known as \emph{local testability} (LT), which implies that the parity check matrix $\mathsf{H}$ has $\mathrm{O}(N)$ left null vectors (redundancies among parity checks), such that any configuration violating few parity checks is close to a codeword.  More precisely, for any bitstring $\mathbf{x}\in\mathbb{F}_2^N$, the number of violated parity checks obeys \begin{equation}
    |\mathsf{H}\mathbf{x}| \ge \alpha \min_{\text{codeword }\mathbf{z}} |\mathbf{x}-\mathbf{z}| \label{eq:confinement}
\end{equation}
 for some O(1) $\alpha>0$. (\ref{eq:confinement}) is called \emph{linear soundness} in the literature and is helpful to us.  In particular, \revise{ linear soundness implies that any bitstring far from all codewords necessarily flips an $\mathrm{O}(1)$ fraction of parity checks, and is a finite energy density state. This in turn implies LT.}  More general LDPC codes have a property analogous to (\ref{eq:confinement}) that holds locally near low-energy states \cite{Sipser_1996}, and this also leads to eigenstate localization with a few complications: see the Supplementary Material (SM) \footnote{SM also contains formal statements and proofs.} for details.  (\ref{eq:confinement}) is impossible in a code which is geometrically local in finite spatial dimension $d$: nucleating a bubble of radius $R$ inside of which the configuration corresponds to codeword $\mathbf{z}$, outside of which there is codeword $\mathbf{0}$, flips $\mathrm{O}(R^d)$ bits, violating $\mathrm{O}(R^{d-1})$ parity checks.

Given parity check matrix $\mathsf{H}$, we can define a classical $q$-local Hamiltonian (with $\le q$-body interactions): \begin{equation}
    H_0 = \frac{1}{2} \sum_{\text{parity check }C} \left[1- \prod_{i \in C} Z_i\right] = \sum_C P_C.  \label{eq:H0}
\end{equation}
For later convenience, the parameters $1-2x_i = Z_i \in \lbrace \pm 1 \rbrace$, rather than $\mathbb{F}_2$; we also defined shorthand $P_C$.  $i\in C$ means $\mathsf{H}_{Ci}=1$.  Notice that $H_0= |\mathsf{H}\mathbf{x}| $.

It is illustrative to pause and study a simple example. If our parity checks $C\in \lbrace 1,\ldots, n-1\rbrace$ while $i\in \lbrace 1,\ldots, n\rbrace$, we can consider parity check matrix \begin{equation}
    \mathsf{H}_{Ci} = \left\lbrace\begin{array}{ll} 1 &\ C=i \text{ or } i-1 \\ 0 &\ \text{otherwise} \end{array}\right.,
\end{equation}
which leads to the 1d Ising model: \begin{equation}
    H_0 = \sum_{n=1}^{N-1} \frac{1-Z_i Z_{i+1}}{2},
\end{equation}
known in information theory as the repetition code.  The parity checks are then simply the ferromagnetic interactions that prefer to align nearby spins, while the codewords are the states where all $Z_i = +1$ (codeword $0\cdots 0$) or all $Z_i = -1$ (codeword $1\cdots 1$).   We can add redundant parity checks by replacing the 1d Ising model with the 2d Ising model.  This does not change the codewords, but we do gain a weaker form of LT, in which all states with $\ll \sqrt{N}$ violated parity checks are close to a codeword and easily decodable. As is well-known, this is sufficient to cause a (ferromagnetic) thermal phase transition: (almost) all low energy states clustered near codewords.


\emph{Our model}.--- We are now ready to return to many-body quantum mechanics. We can interpret Hamiltonian $H_0$ in (\ref{eq:H0}) as a ``classical Hamiltonian" on this quantum Hilbert space, where $Z$s simply represent Pauli matrices.  We say that $H_0$ is classical because it is trivial to diagonalize: its eigenvectors are $|\mathbf{s}\rangle$ and eigenvalues are the number of violated parity checks in the bitstring $\mathbf{s}$.  
 $H_0$ has a very large symmetry group $\mathbb{Z}_2^K$ consisting of all operators \begin{equation}
    X_{\mathbf{z}} = \prod_{i : z_i=1 \text{ in codeword }\mathbf{z}} X_i.
\end{equation}
These operators correspond to shifting the state of the classical code by codeword $\mathbf{z}$, which by definition does not modify any parity check.  Hence, at the quantum level, $[H_0,X_{\mathbf{z}}] = 0$.

Upon choosing a c3LTC, we introduce the quantum Hamiltonian \begin{equation}
    H = H_0 + H_{\mathrm{SB}} +V + H_{\mathrm{L}}, \label{eq:H}
\end{equation}where $H_0$ is given by (\ref{eq:H0}).  The remaining three terms are as follows.  Firstly, we introduce the symmetry-breaking \begin{equation}
    H_{\mathrm{SB}} = \sum_C \sum_{i\in C} J_{Ci} Z_i P_C, \;\;\; |J_{Ci}| =  \frac{1}{2 q},  \label{eq:HSB}
\end{equation}
where the restriction on $J_{Ci}$ is chosen such that the analogue of (\ref{eq:confinement}) continues to hold up to $\alpha \rightarrow \alpha/2$, and we observe that $H_{\mathrm{SB}}$ does not modify the $q$-locality of $H$.  The $J_{Ci}$ \revise{with arbitrary signs} are chosen to not be perturbatively small, so that $H$ is not close to $H_0$, and are also chosen to break all of the $\mathbb{Z}_2^K$ symmetries of the problem.  This latter step is important as eigenstates of $H_0$ necessarily transform in irreducible representations of any exact symmetries, which can delocalize them.  $V$ is a generic perturbation which can be decomposed as in (\ref{eq:Hdecompose}); we assume that it is $\Delta^\prime$-local (for O(1) $\Delta^\prime$), and that for each site $i$, \begin{equation}
    \left\lVert \sum_{S: i\in S} V_S \right\rVert \le \epsilon . \label{eq:Vbound}
\end{equation}
$V$ breaks the solvability of $H$: eigenstates are now linear combinations of exponentially many bitstrings $|\mathbf{s}\rangle$.
Here $\epsilon \ll 1$ will be perturbatively small.  Lastly,
\begin{equation}
    H_{\rm L} = \frac{\epsilon}{\sqrt{N}} \sum_i h_i Z_i,
\end{equation}
where $h_i$ are independent and identically distributed zero-mean, unit-variance Gaussian random variables.

The main result of this paper is that for sufficiently low energy density and sufficiently small O(1) $\epsilon$, given a generic Hamiltonian of the form (\ref{eq:H}), almost surely in the thermodynamic limit $N\rightarrow \infty$, \revise{ all eigenstates of $H$ with energy $E\le \epsilon_* N$} are many-body localized near a single codeword.  \revise{We show in the SM that there are exponentially many such localized eigenstates.}  We expect that this is a genuine many-body mobility edge \cite{Basko_2006}, in contrast to ``full MBL" \cite{liom, lbits}, although we have not proved that high energy eigenstates must be delocalized. Since our construction is insensitive to $V$, so long as it obeys (\ref{eq:Vbound}), localization is robust to perturbations. A formal statement and proof of these claims are in the SM. 

\emph{Detectability.}--- As emphasized in the introduction, localization is inherently a question about finite $N$ systems in the $t\rightarrow\infty$ limit.  Nevertheless, it is helpful to ask whether eigenstate localization would have any ``experimental consequences".  It is a reasonable postulate that an experimentalist can neither prepare pure states, let alone eigenstates, and moreover can only measure few-body observables for sufficiently large $N$.  Given such restrictions, let us now show that localization is, in principle, detectable.  

Assume that we are handed a localized eigenstate $|\psi_0\rangle$,  trapped near a single codeword $\mathbf{z}$.  Hence, 
\begin{equation}
    \left| \sum_{i=1}^N (-1)^{z_i} \langle \psi_0|Z_i|\psi_0\rangle \right| > N(1-\alpha d_0),
\end{equation}
which is $\mathrm{O}(N)$ larger than expected in a thermal ensemble. Hence, for a finite fraction of qubits $i$ and any low energy eigenstate $|\psi_0\rangle$, $\langle \psi_0 |Z_i|\psi_0\rangle$ fails to obey the eigenstate thermalization hypothesis (\ref{eq:ergodic}). Of course, an experimentalist cannot directly prepare $|\psi_0\rangle$, so we further show in the SM that given arbitrary initial $|\varphi\rangle$ supported on bitstrings $\mathbf{x}$ sufficiently close to codeword $\mathbf{z}$: $|\mathbf{x}-\mathbf{z}| \le \theta N$, for some O(1) $\theta$ and sufficiently small $\epsilon$, the time-evolved state $\mathrm{e}^{-\mathrm{i}Ht}|\varphi\rangle$ is trapped near the codeword for all $t$. By linearity of quantum mechanics, this conclusion is unchanged if handed a mixed state containing multiple $|\varphi\rangle$ trapped near a codeword.

This is a clear violation of ergodicity, even if we first take $t\rightarrow\infty$ before $N\rightarrow \infty$.  In either classical \cite{Montanari_2006,Hong:2024vlr} or semiclassical \cite{coleman} (`false vacuum') analyses of the problem, we would expect that the state could escape away from a single codeword in a time $\exp[\mathrm{O}(N)]$.   The fact that this escape can never occur is a clear consequence of eigenstate localization, and is mathematically analogous \cite{maybodorova,galitski} to a single quantum particle remaining trapped near a deep potential minimum for all time in a three-dimensional metal with a mobility edge.

\emph{Proof sketch}.--- We now summarize how we prove MBL. Our proof is surprisingly short, and is related to established techniques for Anderson localization.   We consider the most generic possible eigenstate $H|\psi_0\rangle = E|\psi_0\rangle$ with $E \le \epsilon_*N$. The first step is to show that $|\psi\rangle$ is localized near codewords (but maybe more than one). This follows directly from linear soundness (\ref{eq:confinement}):  all states far from codewords have $H_0 \gg \epsilon_* N$ and are off resonance: see Figure \ref{fig:wells}a.  More precisely, we can decompose $|\psi_0\rangle$ into a convenient basis: 
\begin{equation}\label{eq:psi=cn}
    |\psi_0\rangle = \sum_{\text{codeword }\mathbf{z}}\sum_{n=0}^{N_*} c_{\mathbf{z}n}|\mathbf{z}n\rangle + \sum_{n=N_*+1}^\infty c_n |n\rangle,
\end{equation}
where we define $N_* = \mathrm{O}(N)$ to be a cutoff between low/high energy states, $|\mathbf{z}n\rangle$ to be a sum over $Z$-basis states (bitstrings) obeying $H_0|\mathbf{z}n\rangle = n|\mathbf{z}n\rangle$ \emph{and} which are close to codeword $|\mathbf{z}\rangle = |\mathbf{z}0\rangle$, and $|n\rangle$ obeys $H_0|n\rangle = |n\rangle$.  In practice, it is useful to group a finite fraction of $n$s together into bunches $\tilde n$ such that the perturbation $V$ only couples $\tilde n$ to $\tilde n\pm 1$.  Importantly, due to linear soundness, this decomposition is unique; all low energy states with $n<N_*$ in (\ref{eq:psi=cn}) are close to a single $\mathbf{z}$.  Taking the inner product of the eigenvalue equation with $\langle n|$ and $\langle \mathbf{z}n|$, we obtain a collection of discretized one-dimensional Schr\"odinger equations in $H_0$-space, which are illustrated in  Figure \ref{fig:wells}b.  These one-dimensional lines emanate from each codeword and join at $n=N_*$.  It is straightforward to show that $|\psi_0\rangle$ is trapped at $n \lesssim E$, with exponentially suppressed tails at $n\sim N_*$.  Notice that linear soundness (\ref{eq:confinement}) is crucial: the wave functions are isolated within codewords and can only be joined through very high energy and off-resonant states.  A heuristic discussion of similar ideas is in \cite{baldwin,Altshuler_2010}.

\begin{figure}
    \centering
    \includegraphics[width=\columnwidth]{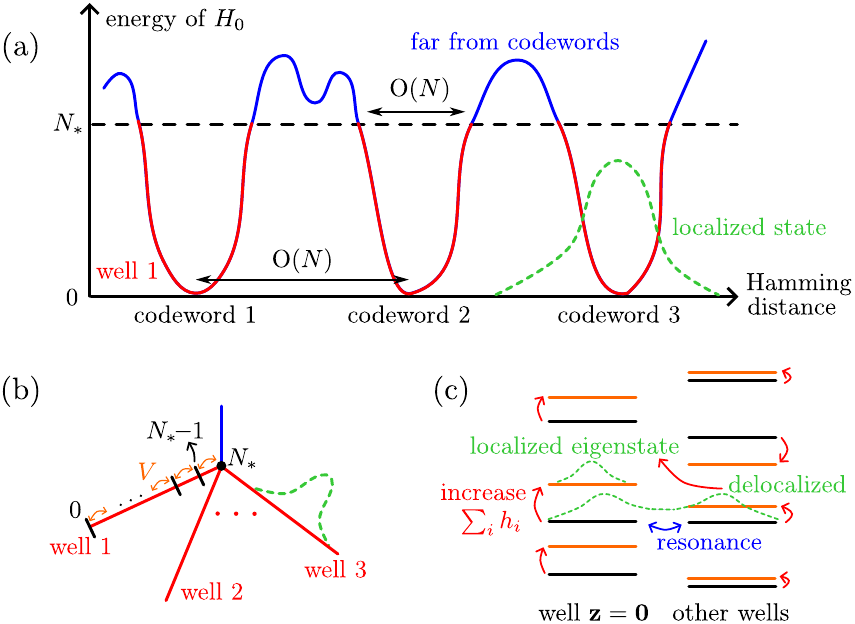}
    \caption{\textbf{Sketch of proof.} (a) The energy landscape of the c3LTC has deep wells near each codeword (ground state of $H_0$), with $\mathrm{O}(N)$ bit flips between each codeword and $\mathrm{O}(N)$ energy penalty to be far from all codewords.  (b) The many-body eigenvalue problem reduces to a collection of coupled one-dimensional quantum walks, where the ``emergent" dimension counts the number of flipped parity checks of $H_0$. Low-energy eigenstates are strongly localized close to codewords.  (c) A tiny amount of disorder in $H_{\mathrm{L}}$ breaks any accidental resonances between the projected Hamiltonians $H_{\mathbf{z}}$ near each codeword, localizing low energy density eigenstates of $H$ near a single codeword. }
    \label{fig:wells}
\end{figure}


It remains to show that the eigenstate is trapped near \emph{one} codeword.   First, we show that \emph{if} an eigenstate is not localized near a single codeword, there is an unlikely energetic resonance between different wells.  More precisely, we look at a truncated version of the Hamiltonian $H_{\mathbf{z}}$ which is isolated near codeword $\mathbf{z}$, and show that if $|\psi_0\rangle$ has comparable weight near codewords $\mathbf{z}$ and $\mathbf{z}^\prime$, then $H_{\mathbf{z}}$ and $H_{\mathbf{z}^\prime}$ must have respective eigenvalues $E$ and $E^\prime$ obeying $|E-E^\prime| \le 2^{-(2+c)N}$ for some $c>0$.  Intuitively, this is absurdly unlikely to happen for a generic Hamiltonian $H$; for example, in a chaotic system, random matrix theory predicts that nearby eigenvalues of a many-body Hamiltonian repel, implying energy splittings of at least $2^{-N}$ \cite{huse2007,bohigas}. To prove that it is impossible for one specific $H$, however, is challenging.  At this point, we invoke the disorder in $H_{\mathrm{L}}$ to show that such resonances between any two $E$ and $E^\prime$ are finely-tuned, and in particular that almost surely any disorder configuration we find has no such resonances.  Intuitively, the disorder easily splits resonances because certain linear combinations of $h_i$ are fields that tend to raise the energy of particular codewords, analogous to external magnetic fields in the Ising model: see Figure \ref{fig:wells}c.  Upon showing the absence of resonances, we have proved that every eigenstate is localized in an exponentially small fraction of the low-energy configuration space.

We have elected to work with c3LTC $H_0$ in the discussion above, which ensures all low energy configurations of $H_0$ are close to a codeword; this choice makes our calculation more pedagogical. However, localization should not be understood as a mere consequence of explicitly breaking a spontaneously broken $\mathbb{Z}_2^K$ symmetry associated to the original logical codewords of $H_0$.  Firstly, $H_{\mathrm{SB}}$ is not  perturbatively small, and can be chosen more generally than (\ref{eq:HSB}), so   $V+H_{\mathrm{L}}$ perturb a non-symmetric Hamiltonian. Secondly, we explain in the SM that MBL persists if $H_0$ is a more general LDPC code with $K,D=\mathrm{O}(N)$, but without LT, such that the vast majority of low energy states are far from all codewords.  Finally, the mathematical mechanism of localization is akin to Anderson's locator expansion \cite{anderson}, and arises from energy level detuning \cite{maybodorova}.  We achieve this without requiring $\exp[\mathrm{O}(N)]$ random couplings in $H$, in contrast to \cite{abouchacra}. 



\emph{Outlook}.---  We have found a $q$-local many-body quantum system for which all low energy-density eigenstates are localized. This existence proof  settles a major open problem in mathematical physics.   Intriguingly, \revise{while our model is certainly many-body, and exhibits localization, the models we have studied look nothing like  models usually explored in the MBL literature} -- rather than looking at highly disordered spin chains, we studied quantum-fluctuating classical error correcting codes, which cannot be embedded in $\mathrm{O}(1)$ spatial dimensions.  It is an important open problem to either construct a model which has provable MBL in a fixed spatial dimension, or to show the impossibility. Low dimensional problems with long-range interactions \cite{LRMBL, LRMBLnumerics} may be promising in this regard.   We comment that the existence of local integrals of motion \cite{liom,lbits} in our model is an open question, although it is unlikely to be so \cite{baldwin, geraedts}.  Our rigorous results are a starting point for future investigations.   

Looking forward, we expect our model and/or methods to provide a powerful new route to exact results in quantum statistical mechanics. For example, our methods lead to intriguing exact results about level statistics in a model with spontaneous symmetry breaking: upon restoring symmetry to our Hamiltonian (associated to bit flips corresponding to the codewords) by setting $H_{\mathrm{SB}}=H_{\mathrm{L}}=0$ and choosing any $V$ with $[V,X_{\mathbf{z}}]=0$, our methods directly show that eigenvalue splittings are anomalously small \cite{bozhao,Altshuler_2010}, in contrast to random matrix theory predictions for the spectral form factor \cite{fratus,winer}.   c3LTCs are closely related to good quantum LDPC codes \cite{Panteleev_2021,dinur2022good,dinur2022LTC,lin2022,Rakovszky:2023fng,Rakovszky:2024iks}, which are  quantum memories \cite{Hong:2024vlr}; the fate of such models under perturbations is worth investigating, as error correcting codes have already been shown to give rise to intriguing phases of classical \cite{Montanari_2006} and quantum \cite{Rakovszky:2023fng,Rakovszky:2024iks,Hong:2024vlr} matter.

\emph{Note added.}--- Other authors have been independently studying a similar model where similar phenomenology is found \cite{Breuckmann_2024}.

\emph{Acknowledgements.}--- We thank Chris Baldwin, Aaron Friedman, Yifan Hong, David Huse, Vedika Khemani and Chris Laumann for useful discussions.  This work was supported by the Alfred P. Sloan Foundation under Grant FG-2020-13795 (AL), the Department of Energy under Quantum Pathfinder Grant DE-SC0024324 (CY, AL), and the Air Force Office of Scientific Research under Award No. FA9550-20-1-0222 (RN).

\bibliography{ltcmbl}

\begin{thebibliography}{81}%
\makeatletter
\providecommand \@ifxundefined [1]{%
 \@ifx{#1\undefined}
}%
\providecommand \@ifnum [1]{%
 \ifnum #1\expandafter \@firstoftwo
 \else \expandafter \@secondoftwo
 \fi
}%
\providecommand \@ifx [1]{%
 \ifx #1\expandafter \@firstoftwo
 \else \expandafter \@secondoftwo
 \fi
}%
\providecommand \natexlab [1]{#1}%
\providecommand \enquote  [1]{``#1''}%
\providecommand \bibnamefont  [1]{#1}%
\providecommand \bibfnamefont [1]{#1}%
\providecommand \citenamefont [1]{#1}%
\providecommand \href@noop [0]{\@secondoftwo}%
\providecommand \href [0]{\begingroup \@sanitize@url \@href}%
\providecommand \@href[1]{\@@startlink{#1}\@@href}%
\providecommand \@@href[1]{\endgroup#1\@@endlink}%
\providecommand \@sanitize@url [0]{\catcode `\\12\catcode `\$12\catcode `\&12\catcode `\#12\catcode `\^12\catcode `\_12\catcode `\%12\relax}%
\providecommand \@@startlink[1]{}%
\providecommand \@@endlink[0]{}%
\providecommand \url  [0]{\begingroup\@sanitize@url \@url }%
\providecommand \@url [1]{\endgroup\@href {#1}{\urlprefix }}%
\providecommand \urlprefix  [0]{URL }%
\providecommand \Eprint [0]{\href }%
\providecommand \doibase [0]{http://dx.doi.org/}%
\providecommand \selectlanguage [0]{\@gobble}%
\providecommand \bibinfo  [0]{\@secondoftwo}%
\providecommand \bibfield  [0]{\@secondoftwo}%
\providecommand \translation [1]{[#1]}%
\providecommand \BibitemOpen [0]{}%
\providecommand \bibitemStop [0]{}%
\providecommand \bibitemNoStop [0]{.\EOS\space}%
\providecommand \EOS [0]{\spacefactor3000\relax}%
\providecommand \BibitemShut  [1]{\csname bibitem#1\endcsname}%
\let\auto@bib@innerbib\@empty
\bibitem [{\citenamefont {Deutsch}(1991)}]{deutsch}%
  \BibitemOpen
  \bibfield  {author} {\bibinfo {author} {\bibfnamefont {J.~M.}\ \bibnamefont {Deutsch}},\ }\bibfield  {title} {\enquote {\bibinfo {title} {Quantum statistical mechanics in a closed system},}\ }\href {\doibase 10.1103/PhysRevA.43.2046} {\bibfield  {journal} {\bibinfo  {journal} {Phys. Rev. A}\ }\textbf {\bibinfo {volume} {43}},\ \bibinfo {pages} {2046--2049} (\bibinfo {year} {1991})}\BibitemShut {NoStop}%
\bibitem [{\citenamefont {Srednicki}(1994)}]{srednicki}%
  \BibitemOpen
  \bibfield  {author} {\bibinfo {author} {\bibfnamefont {Mark}\ \bibnamefont {Srednicki}},\ }\bibfield  {title} {\enquote {\bibinfo {title} {Chaos and quantum thermalization},}\ }\href {\doibase 10.1103/PhysRevE.50.888} {\bibfield  {journal} {\bibinfo  {journal} {Phys. Rev. E}\ }\textbf {\bibinfo {volume} {50}},\ \bibinfo {pages} {888--901} (\bibinfo {year} {1994})}\BibitemShut {NoStop}%
\bibitem [{\citenamefont {Nandkishore}\ and\ \citenamefont {Huse}(2015)}]{mblarcmp}%
  \BibitemOpen
  \bibfield  {author} {\bibinfo {author} {\bibfnamefont {Rahul}\ \bibnamefont {Nandkishore}}\ and\ \bibinfo {author} {\bibfnamefont {David~A}\ \bibnamefont {Huse}},\ }\bibfield  {title} {\enquote {\bibinfo {title} {Many-body localization and thermalization in quantum statistical mechanics},}\ }\href@noop {} {\bibfield  {journal} {\bibinfo  {journal} {Annu. Rev. Condens. Matter Phys.}\ }\textbf {\bibinfo {volume} {6}},\ \bibinfo {pages} {15--38} (\bibinfo {year} {2015})}\BibitemShut {NoStop}%
\bibitem [{\citenamefont {Abanin}\ \emph {et~al.}(2019)\citenamefont {Abanin}, \citenamefont {Altman}, \citenamefont {Bloch},\ and\ \citenamefont {Serbyn}}]{mblrmp}%
  \BibitemOpen
  \bibfield  {author} {\bibinfo {author} {\bibfnamefont {Dmitry~A}\ \bibnamefont {Abanin}}, \bibinfo {author} {\bibfnamefont {Ehud}\ \bibnamefont {Altman}}, \bibinfo {author} {\bibfnamefont {Immanuel}\ \bibnamefont {Bloch}}, \ and\ \bibinfo {author} {\bibfnamefont {Maksym}\ \bibnamefont {Serbyn}},\ }\bibfield  {title} {\enquote {\bibinfo {title} {Colloquium: Many-body localization, thermalization, and entanglement},}\ }\href@noop {} {\bibfield  {journal} {\bibinfo  {journal} {Reviews of Modern Physics}\ }\textbf {\bibinfo {volume} {91}},\ \bibinfo {pages} {021001} (\bibinfo {year} {2019})}\BibitemShut {NoStop}%
\bibitem [{\citenamefont {Deutsch}(2018)}]{Deutsch_2018}%
  \BibitemOpen
  \bibfield  {author} {\bibinfo {author} {\bibfnamefont {Joshua~M}\ \bibnamefont {Deutsch}},\ }\bibfield  {title} {\enquote {\bibinfo {title} {Eigenstate thermalization hypothesis},}\ }\href {\doibase 10.1088/1361-6633/aac9f1} {\bibfield  {journal} {\bibinfo  {journal} {Reports on Progress in Physics}\ }\textbf {\bibinfo {volume} {81}},\ \bibinfo {pages} {082001} (\bibinfo {year} {2018})}\BibitemShut {NoStop}%
\bibitem [{\citenamefont {Baxter}(2016)}]{Baxter}%
  \BibitemOpen
  \bibfield  {author} {\bibinfo {author} {\bibfnamefont {Rodney~J}\ \bibnamefont {Baxter}},\ }\href@noop {} {\emph {\bibinfo {title} {Exactly solved models in statistical mechanics}}}\ (\bibinfo  {publisher} {Elsevier},\ \bibinfo {year} {2016})\BibitemShut {NoStop}%
\bibitem [{\citenamefont {Turner}\ \emph {et~al.}(2018)\citenamefont {Turner}, \citenamefont {Michailidis}, \citenamefont {Abanin}, \citenamefont {Serbyn},\ and\ \citenamefont {Papi{\'c}}}]{scar_PXP18}%
  \BibitemOpen
  \bibfield  {author} {\bibinfo {author} {\bibfnamefont {Christopher~J}\ \bibnamefont {Turner}}, \bibinfo {author} {\bibfnamefont {Alexios~A}\ \bibnamefont {Michailidis}}, \bibinfo {author} {\bibfnamefont {Dmitry~A}\ \bibnamefont {Abanin}}, \bibinfo {author} {\bibfnamefont {Maksym}\ \bibnamefont {Serbyn}}, \ and\ \bibinfo {author} {\bibfnamefont {Zlatko}\ \bibnamefont {Papi{\'c}}},\ }\bibfield  {title} {\enquote {\bibinfo {title} {Weak ergodicity breaking from quantum many-body scars},}\ }\href {\doibase https://doi.org/10.1038/s41567-018-0137-5} {\bibfield  {journal} {\bibinfo  {journal} {Nature Physics}\ }\textbf {\bibinfo {volume} {14}},\ \bibinfo {pages} {745--749} (\bibinfo {year} {2018})}\BibitemShut {NoStop}%
\bibitem [{\citenamefont {Moudgalya}\ \emph {et~al.}(2018{\natexlab{a}})\citenamefont {Moudgalya}, \citenamefont {Rachel}, \citenamefont {Bernevig},\ and\ \citenamefont {Regnault}}]{scar_exact18}%
  \BibitemOpen
  \bibfield  {author} {\bibinfo {author} {\bibfnamefont {Sanjay}\ \bibnamefont {Moudgalya}}, \bibinfo {author} {\bibfnamefont {Stephan}\ \bibnamefont {Rachel}}, \bibinfo {author} {\bibfnamefont {B.~Andrei}\ \bibnamefont {Bernevig}}, \ and\ \bibinfo {author} {\bibfnamefont {Nicolas}\ \bibnamefont {Regnault}},\ }\bibfield  {title} {\enquote {\bibinfo {title} {Exact excited states of nonintegrable models},}\ }\href {\doibase 10.1103/PhysRevB.98.235155} {\bibfield  {journal} {\bibinfo  {journal} {Phys. Rev. B}\ }\textbf {\bibinfo {volume} {98}},\ \bibinfo {pages} {235155} (\bibinfo {year} {2018}{\natexlab{a}})}\BibitemShut {NoStop}%
\bibitem [{\citenamefont {Moudgalya}\ \emph {et~al.}(2018{\natexlab{b}})\citenamefont {Moudgalya}, \citenamefont {Regnault},\ and\ \citenamefont {Bernevig}}]{scar_exact18_1}%
  \BibitemOpen
  \bibfield  {author} {\bibinfo {author} {\bibfnamefont {Sanjay}\ \bibnamefont {Moudgalya}}, \bibinfo {author} {\bibfnamefont {Nicolas}\ \bibnamefont {Regnault}}, \ and\ \bibinfo {author} {\bibfnamefont {B.~Andrei}\ \bibnamefont {Bernevig}},\ }\bibfield  {title} {\enquote {\bibinfo {title} {Entanglement of exact excited states of affleck-kennedy-lieb-tasaki models: Exact results, many-body scars, and violation of the strong eigenstate thermalization hypothesis},}\ }\href {\doibase 10.1103/PhysRevB.98.235156} {\bibfield  {journal} {\bibinfo  {journal} {Phys. Rev. B}\ }\textbf {\bibinfo {volume} {98}},\ \bibinfo {pages} {235156} (\bibinfo {year} {2018}{\natexlab{b}})}\BibitemShut {NoStop}%
\bibitem [{\citenamefont {Serbyn}\ \emph {et~al.}(2021)\citenamefont {Serbyn}, \citenamefont {Abanin},\ and\ \citenamefont {Papi\'c}}]{scar_rev21}%
  \BibitemOpen
  \bibfield  {author} {\bibinfo {author} {\bibfnamefont {Maksym}\ \bibnamefont {Serbyn}}, \bibinfo {author} {\bibfnamefont {Dmitry~A.}\ \bibnamefont {Abanin}}, \ and\ \bibinfo {author} {\bibfnamefont {Zlatko}\ \bibnamefont {Papi\'c}},\ }\bibfield  {title} {\enquote {\bibinfo {title} {{Quantum many-body scars and weak breaking of ergodicity}},}\ }\href {\doibase 10.1038/s41567-021-01230-2} {\bibfield  {journal} {\bibinfo  {journal} {Nature Phys.}\ }\textbf {\bibinfo {volume} {17}},\ \bibinfo {pages} {675--685} (\bibinfo {year} {2021})},\ \Eprint {http://arxiv.org/abs/2011.09486} {arXiv:2011.09486 [quant-ph]} \BibitemShut {NoStop}%
\bibitem [{\citenamefont {Chandran}\ \emph {et~al.}(2023)\citenamefont {Chandran}, \citenamefont {Iadecola}, \citenamefont {Khemani},\ and\ \citenamefont {Moessner}}]{scarsarcmp}%
  \BibitemOpen
  \bibfield  {author} {\bibinfo {author} {\bibfnamefont {Anushya}\ \bibnamefont {Chandran}}, \bibinfo {author} {\bibfnamefont {Thomas}\ \bibnamefont {Iadecola}}, \bibinfo {author} {\bibfnamefont {Vedika}\ \bibnamefont {Khemani}}, \ and\ \bibinfo {author} {\bibfnamefont {Roderich}\ \bibnamefont {Moessner}},\ }\bibfield  {title} {\enquote {\bibinfo {title} {Quantum many-body scars: A quasiparticle perspective},}\ }\href {\doibase 10.1146/annurev-conmatphys-031620-101617} {\bibfield  {journal} {\bibinfo  {journal} {Annual Review of Condensed Matter Physics}\ }\textbf {\bibinfo {volume} {14}},\ \bibinfo {pages} {443--469} (\bibinfo {year} {2023})}\BibitemShut {NoStop}%
\bibitem [{\citenamefont {Khemani}\ \emph {et~al.}(2020)\citenamefont {Khemani}, \citenamefont {Hermele},\ and\ \citenamefont {Nandkishore}}]{KHN}%
  \BibitemOpen
  \bibfield  {author} {\bibinfo {author} {\bibfnamefont {Vedika}\ \bibnamefont {Khemani}}, \bibinfo {author} {\bibfnamefont {Michael}\ \bibnamefont {Hermele}}, \ and\ \bibinfo {author} {\bibfnamefont {Rahul}\ \bibnamefont {Nandkishore}},\ }\bibfield  {title} {\enquote {\bibinfo {title} {Localization from {H}ilbert space shattering: From theory to physical realizations},}\ }\href {\doibase 10.1103/PhysRevB.101.174204} {\bibfield  {journal} {\bibinfo  {journal} {Physical Review B}\ }\textbf {\bibinfo {volume} {101}},\ \bibinfo {pages} {174204} (\bibinfo {year} {2020})}\BibitemShut {NoStop}%
\bibitem [{\citenamefont {Sala}\ \emph {et~al.}(2020)\citenamefont {Sala}, \citenamefont {Rakovszky}, \citenamefont {Verresen}, \citenamefont {Knap},\ and\ \citenamefont {Pollmann}}]{Sala2020}%
  \BibitemOpen
  \bibfield  {author} {\bibinfo {author} {\bibfnamefont {Pablo}\ \bibnamefont {Sala}}, \bibinfo {author} {\bibfnamefont {Tibor}\ \bibnamefont {Rakovszky}}, \bibinfo {author} {\bibfnamefont {Ruben}\ \bibnamefont {Verresen}}, \bibinfo {author} {\bibfnamefont {Michael}\ \bibnamefont {Knap}}, \ and\ \bibinfo {author} {\bibfnamefont {Frank}\ \bibnamefont {Pollmann}},\ }\bibfield  {title} {\enquote {\bibinfo {title} {Ergodicity breaking arising from {H}ilbert space fragmentation in dipole-conserving {H}amiltonians},}\ }\href {\doibase 10.1103/PhysRevX.10.011047} {\bibfield  {journal} {\bibinfo  {journal} {Physical Review X}\ }\textbf {\bibinfo {volume} {10}},\ \bibinfo {pages} {011047} (\bibinfo {year} {2020})}\BibitemShut {NoStop}%
\bibitem [{\citenamefont {Moudgalya}\ \emph {et~al.}(2021)\citenamefont {Moudgalya}, \citenamefont {Prem}, \citenamefont {Nandkishore}, \citenamefont {Regnault},\ and\ \citenamefont {Bernevig}}]{MPNRB}%
  \BibitemOpen
  \bibfield  {author} {\bibinfo {author} {\bibfnamefont {Sanjay}\ \bibnamefont {Moudgalya}}, \bibinfo {author} {\bibfnamefont {Abhinav}\ \bibnamefont {Prem}}, \bibinfo {author} {\bibfnamefont {Rahul}\ \bibnamefont {Nandkishore}}, \bibinfo {author} {\bibfnamefont {Nicolas}\ \bibnamefont {Regnault}}, \ and\ \bibinfo {author} {\bibfnamefont {B.~Andrei}\ \bibnamefont {Bernevig}},\ }\enquote {\bibinfo {title} {Thermalization and its absence within krylov subspaces of a constrained hamiltonian},}\ in\ \href {\doibase 10.1142/9789811231711_0009} {\emph {\bibinfo {booktitle} {Memorial Volume for Shoucheng Zhang}}}\ (\bibinfo  {publisher} {WORLD SCIENTIFIC},\ \bibinfo {year} {2021})\ p.\ \bibinfo {pages} {147–209}\BibitemShut {NoStop}%
\bibitem [{\citenamefont {Moudgalya}\ \emph {et~al.}(2022)\citenamefont {Moudgalya}, \citenamefont {Bernevig},\ and\ \citenamefont {Regnault}}]{moudgalya2022quantum}%
  \BibitemOpen
  \bibfield  {author} {\bibinfo {author} {\bibfnamefont {Sanjay}\ \bibnamefont {Moudgalya}}, \bibinfo {author} {\bibfnamefont {B~Andrei}\ \bibnamefont {Bernevig}}, \ and\ \bibinfo {author} {\bibfnamefont {Nicolas}\ \bibnamefont {Regnault}},\ }\bibfield  {title} {\enquote {\bibinfo {title} {Quantum many-body scars and hilbert space fragmentation: a review of exact results},}\ }\href@noop {} {\bibfield  {journal} {\bibinfo  {journal} {Reports on Progress in Physics}\ }\textbf {\bibinfo {volume} {85}},\ \bibinfo {pages} {086501} (\bibinfo {year} {2022})}\BibitemShut {NoStop}%
\bibitem [{\citenamefont {Moudgalya}\ and\ \citenamefont {Motrunich}(2022)}]{algebra_Motrunich22}%
  \BibitemOpen
  \bibfield  {author} {\bibinfo {author} {\bibfnamefont {Sanjay}\ \bibnamefont {Moudgalya}}\ and\ \bibinfo {author} {\bibfnamefont {Olexei~I.}\ \bibnamefont {Motrunich}},\ }\bibfield  {title} {\enquote {\bibinfo {title} {Hilbert space fragmentation and commutant algebras},}\ }\href {\doibase 10.1103/PhysRevX.12.011050} {\bibfield  {journal} {\bibinfo  {journal} {Phys. Rev. X}\ }\textbf {\bibinfo {volume} {12}},\ \bibinfo {pages} {011050} (\bibinfo {year} {2022})}\BibitemShut {NoStop}%
\bibitem [{\citenamefont {Stephen}\ \emph {et~al.}(2024)\citenamefont {Stephen}, \citenamefont {Hart},\ and\ \citenamefont {Nandkishore}}]{SHN}%
  \BibitemOpen
  \bibfield  {author} {\bibinfo {author} {\bibfnamefont {David~T.}\ \bibnamefont {Stephen}}, \bibinfo {author} {\bibfnamefont {Oliver}\ \bibnamefont {Hart}}, \ and\ \bibinfo {author} {\bibfnamefont {Rahul~M.}\ \bibnamefont {Nandkishore}},\ }\bibfield  {title} {\enquote {\bibinfo {title} {Ergodicity breaking provably robust to arbitrary perturbations},}\ }\href {\doibase 10.1103/PhysRevLett.132.040401} {\bibfield  {journal} {\bibinfo  {journal} {Phys. Rev. Lett.}\ }\textbf {\bibinfo {volume} {132}},\ \bibinfo {pages} {040401} (\bibinfo {year} {2024})}\BibitemShut {NoStop}%
\bibitem [{\citenamefont {Stahl}\ \emph {et~al.}(2024)\citenamefont {Stahl}, \citenamefont {Nandkishore},\ and\ \citenamefont {Hart}}]{SNH}%
  \BibitemOpen
  \bibfield  {author} {\bibinfo {author} {\bibfnamefont {Charles}\ \bibnamefont {Stahl}}, \bibinfo {author} {\bibfnamefont {Rahul}\ \bibnamefont {Nandkishore}}, \ and\ \bibinfo {author} {\bibfnamefont {Oliver}\ \bibnamefont {Hart}},\ }\bibfield  {title} {\enquote {\bibinfo {title} {{Topologically stable ergodicity breaking from emergent higher-form symmetries in generalized quantum loop models}},}\ }\href {\doibase 10.21468/SciPostPhys.16.3.068} {\bibfield  {journal} {\bibinfo  {journal} {SciPost Phys.}\ }\textbf {\bibinfo {volume} {16}},\ \bibinfo {pages} {068} (\bibinfo {year} {2024})}\BibitemShut {NoStop}%
\bibitem [{\citenamefont {Balasubramanian}\ \emph {et~al.}(2024)\citenamefont {Balasubramanian}, \citenamefont {Gopalakrishnan}, \citenamefont {Khudorozhkov},\ and\ \citenamefont {Lake}}]{word_Lake23}%
  \BibitemOpen
  \bibfield  {author} {\bibinfo {author} {\bibfnamefont {Shankar}\ \bibnamefont {Balasubramanian}}, \bibinfo {author} {\bibfnamefont {Sarang}\ \bibnamefont {Gopalakrishnan}}, \bibinfo {author} {\bibfnamefont {Alexey}\ \bibnamefont {Khudorozhkov}}, \ and\ \bibinfo {author} {\bibfnamefont {Ethan}\ \bibnamefont {Lake}},\ }\bibfield  {title} {\enquote {\bibinfo {title} {Glassy word problems: Ultraslow relaxation, hilbert space jamming, and computational complexity},}\ }\href {\doibase 10.1103/PhysRevX.14.021034} {\bibfield  {journal} {\bibinfo  {journal} {Phys. Rev. X}\ }\textbf {\bibinfo {volume} {14}},\ \bibinfo {pages} {021034} (\bibinfo {year} {2024})}\BibitemShut {NoStop}%
\bibitem [{\citenamefont {Anderson}(1958)}]{anderson}%
  \BibitemOpen
  \bibfield  {author} {\bibinfo {author} {\bibfnamefont {P.~W.}\ \bibnamefont {Anderson}},\ }\bibfield  {title} {\enquote {\bibinfo {title} {Absence of diffusion in certain random lattices},}\ }\href {\doibase 10.1103/PhysRev.109.1492} {\bibfield  {journal} {\bibinfo  {journal} {Phys. Rev.}\ }\textbf {\bibinfo {volume} {109}},\ \bibinfo {pages} {1492--1505} (\bibinfo {year} {1958})}\BibitemShut {NoStop}%
\bibitem [{\citenamefont {Altshuler}\ \emph {et~al.}(1997)\citenamefont {Altshuler}, \citenamefont {Gefen}, \citenamefont {Kamenev},\ and\ \citenamefont {Levitov}}]{AGKL}%
  \BibitemOpen
  \bibfield  {author} {\bibinfo {author} {\bibfnamefont {Boris~L.}\ \bibnamefont {Altshuler}}, \bibinfo {author} {\bibfnamefont {Yuval}\ \bibnamefont {Gefen}}, \bibinfo {author} {\bibfnamefont {Alex}\ \bibnamefont {Kamenev}}, \ and\ \bibinfo {author} {\bibfnamefont {Leonid~S.}\ \bibnamefont {Levitov}},\ }\bibfield  {title} {\enquote {\bibinfo {title} {Quasiparticle lifetime in a finite system: A nonperturbative approach},}\ }\href {\doibase 10.1103/PhysRevLett.78.2803} {\bibfield  {journal} {\bibinfo  {journal} {Phys. Rev. Lett.}\ }\textbf {\bibinfo {volume} {78}},\ \bibinfo {pages} {2803--2806} (\bibinfo {year} {1997})}\BibitemShut {NoStop}%
\bibitem [{\citenamefont {Gornyi}\ \emph {et~al.}(2005)\citenamefont {Gornyi}, \citenamefont {Mirlin},\ and\ \citenamefont {Polyakov}}]{GMP}%
  \BibitemOpen
  \bibfield  {author} {\bibinfo {author} {\bibfnamefont {I.~V.}\ \bibnamefont {Gornyi}}, \bibinfo {author} {\bibfnamefont {A.~D.}\ \bibnamefont {Mirlin}}, \ and\ \bibinfo {author} {\bibfnamefont {D.~G.}\ \bibnamefont {Polyakov}},\ }\bibfield  {title} {\enquote {\bibinfo {title} {Interacting electrons in disordered wires: Anderson localization and low-$t$ transport},}\ }\href {\doibase 10.1103/PhysRevLett.95.206603} {\bibfield  {journal} {\bibinfo  {journal} {Phys. Rev. Lett.}\ }\textbf {\bibinfo {volume} {95}},\ \bibinfo {pages} {206603} (\bibinfo {year} {2005})}\BibitemShut {NoStop}%
\bibitem [{\citenamefont {Basko}\ \emph {et~al.}(2006)\citenamefont {Basko}, \citenamefont {Aleiner},\ and\ \citenamefont {Altshuler}}]{Basko_2006}%
  \BibitemOpen
  \bibfield  {author} {\bibinfo {author} {\bibfnamefont {D.M.}\ \bibnamefont {Basko}}, \bibinfo {author} {\bibfnamefont {I.L.}\ \bibnamefont {Aleiner}}, \ and\ \bibinfo {author} {\bibfnamefont {B.L.}\ \bibnamefont {Altshuler}},\ }\bibfield  {title} {\enquote {\bibinfo {title} {Metal–insulator transition in a weakly interacting many-electron system with localized single-particle states},}\ }\href {\doibase 10.1016/j.aop.2005.11.014} {\bibfield  {journal} {\bibinfo  {journal} {Annals of Physics}\ }\textbf {\bibinfo {volume} {321}},\ \bibinfo {pages} {1126–1205} (\bibinfo {year} {2006})}\BibitemShut {NoStop}%
\bibitem [{\citenamefont {Oganesyan}\ and\ \citenamefont {Huse}(2007)}]{huse2007}%
  \BibitemOpen
  \bibfield  {author} {\bibinfo {author} {\bibfnamefont {Vadim}\ \bibnamefont {Oganesyan}}\ and\ \bibinfo {author} {\bibfnamefont {David~A.}\ \bibnamefont {Huse}},\ }\bibfield  {title} {\enquote {\bibinfo {title} {Localization of interacting fermions at high temperature},}\ }\href {\doibase 10.1103/PhysRevB.75.155111} {\bibfield  {journal} {\bibinfo  {journal} {Phys. Rev. B}\ }\textbf {\bibinfo {volume} {75}},\ \bibinfo {pages} {155111} (\bibinfo {year} {2007})}\BibitemShut {NoStop}%
\bibitem [{\citenamefont {Serbyn}\ \emph {et~al.}(2013)\citenamefont {Serbyn}, \citenamefont {Papi\ifmmode~\acute{c}\else \'{c}\fi{}},\ and\ \citenamefont {Abanin}}]{liom}%
  \BibitemOpen
  \bibfield  {author} {\bibinfo {author} {\bibfnamefont {Maksym}\ \bibnamefont {Serbyn}}, \bibinfo {author} {\bibfnamefont {Z.}~\bibnamefont {Papi\ifmmode~\acute{c}\else \'{c}\fi{}}}, \ and\ \bibinfo {author} {\bibfnamefont {Dmitry~A.}\ \bibnamefont {Abanin}},\ }\bibfield  {title} {\enquote {\bibinfo {title} {Local conservation laws and the structure of the many-body localized states},}\ }\href {\doibase 10.1103/PhysRevLett.111.127201} {\bibfield  {journal} {\bibinfo  {journal} {Phys. Rev. Lett.}\ }\textbf {\bibinfo {volume} {111}},\ \bibinfo {pages} {127201} (\bibinfo {year} {2013})}\BibitemShut {NoStop}%
\bibitem [{\citenamefont {Huse}\ \emph {et~al.}(2014)\citenamefont {Huse}, \citenamefont {Nandkishore},\ and\ \citenamefont {Oganesyan}}]{lbits}%
  \BibitemOpen
  \bibfield  {author} {\bibinfo {author} {\bibfnamefont {David~A.}\ \bibnamefont {Huse}}, \bibinfo {author} {\bibfnamefont {Rahul}\ \bibnamefont {Nandkishore}}, \ and\ \bibinfo {author} {\bibfnamefont {Vadim}\ \bibnamefont {Oganesyan}},\ }\bibfield  {title} {\enquote {\bibinfo {title} {Phenomenology of fully many-body-localized systems},}\ }\href {\doibase 10.1103/PhysRevB.90.174202} {\bibfield  {journal} {\bibinfo  {journal} {Phys. Rev. B}\ }\textbf {\bibinfo {volume} {90}},\ \bibinfo {pages} {174202} (\bibinfo {year} {2014})}\BibitemShut {NoStop}%
\bibitem [{\citenamefont {Chandran}\ \emph {et~al.}(2015)\citenamefont {Chandran}, \citenamefont {Kim}, \citenamefont {Vidal},\ and\ \citenamefont {Abanin}}]{liom2}%
  \BibitemOpen
  \bibfield  {author} {\bibinfo {author} {\bibfnamefont {Anushya}\ \bibnamefont {Chandran}}, \bibinfo {author} {\bibfnamefont {Isaac~H.}\ \bibnamefont {Kim}}, \bibinfo {author} {\bibfnamefont {Guifre}\ \bibnamefont {Vidal}}, \ and\ \bibinfo {author} {\bibfnamefont {Dmitry~A.}\ \bibnamefont {Abanin}},\ }\bibfield  {title} {\enquote {\bibinfo {title} {Constructing local integrals of motion in the many-body localized phase},}\ }\href {\doibase 10.1103/PhysRevB.91.085425} {\bibfield  {journal} {\bibinfo  {journal} {Phys. Rev. B}\ }\textbf {\bibinfo {volume} {91}},\ \bibinfo {pages} {085425} (\bibinfo {year} {2015})}\BibitemShut {NoStop}%
\bibitem [{\citenamefont {Dumitrescu}\ \emph {et~al.}(2017)\citenamefont {Dumitrescu}, \citenamefont {Vasseur},\ and\ \citenamefont {Potter}}]{drewpotter}%
  \BibitemOpen
  \bibfield  {author} {\bibinfo {author} {\bibfnamefont {Philipp~T.}\ \bibnamefont {Dumitrescu}}, \bibinfo {author} {\bibfnamefont {Romain}\ \bibnamefont {Vasseur}}, \ and\ \bibinfo {author} {\bibfnamefont {Andrew~C.}\ \bibnamefont {Potter}},\ }\bibfield  {title} {\enquote {\bibinfo {title} {Scaling theory of entanglement at the many-body localization transition},}\ }\href {\doibase 10.1103/PhysRevLett.119.110604} {\bibfield  {journal} {\bibinfo  {journal} {Phys. Rev. Lett.}\ }\textbf {\bibinfo {volume} {119}},\ \bibinfo {pages} {110604} (\bibinfo {year} {2017})}\BibitemShut {NoStop}%
\bibitem [{\citenamefont {Goremykina}\ \emph {et~al.}(2019)\citenamefont {Goremykina}, \citenamefont {Vasseur},\ and\ \citenamefont {Serbyn}}]{serbyn}%
  \BibitemOpen
  \bibfield  {author} {\bibinfo {author} {\bibfnamefont {Anna}\ \bibnamefont {Goremykina}}, \bibinfo {author} {\bibfnamefont {Romain}\ \bibnamefont {Vasseur}}, \ and\ \bibinfo {author} {\bibfnamefont {Maksym}\ \bibnamefont {Serbyn}},\ }\bibfield  {title} {\enquote {\bibinfo {title} {Analytically solvable renormalization group for the many-body localization transition},}\ }\href {\doibase 10.1103/PhysRevLett.122.040601} {\bibfield  {journal} {\bibinfo  {journal} {Phys. Rev. Lett.}\ }\textbf {\bibinfo {volume} {122}},\ \bibinfo {pages} {040601} (\bibinfo {year} {2019})}\BibitemShut {NoStop}%
\bibitem [{\citenamefont {Dumitrescu}\ \emph {et~al.}(2019)\citenamefont {Dumitrescu}, \citenamefont {Goremykina}, \citenamefont {Parameswaran}, \citenamefont {Serbyn},\ and\ \citenamefont {Vasseur}}]{serbyn2}%
  \BibitemOpen
  \bibfield  {author} {\bibinfo {author} {\bibfnamefont {Philipp~T.}\ \bibnamefont {Dumitrescu}}, \bibinfo {author} {\bibfnamefont {Anna}\ \bibnamefont {Goremykina}}, \bibinfo {author} {\bibfnamefont {Siddharth~A.}\ \bibnamefont {Parameswaran}}, \bibinfo {author} {\bibfnamefont {Maksym}\ \bibnamefont {Serbyn}}, \ and\ \bibinfo {author} {\bibfnamefont {Romain}\ \bibnamefont {Vasseur}},\ }\bibfield  {title} {\enquote {\bibinfo {title} {Kosterlitz-thouless scaling at many-body localization phase transitions},}\ }\href {\doibase 10.1103/PhysRevB.99.094205} {\bibfield  {journal} {\bibinfo  {journal} {Phys. Rev. B}\ }\textbf {\bibinfo {volume} {99}},\ \bibinfo {pages} {094205} (\bibinfo {year} {2019})}\BibitemShut {NoStop}%
\bibitem [{\citenamefont {Morningstar}\ \emph {et~al.}(2022)\citenamefont {Morningstar}, \citenamefont {Colmenarez}, \citenamefont {Khemani}, \citenamefont {Luitz},\ and\ \citenamefont {Huse}}]{morningstar}%
  \BibitemOpen
  \bibfield  {author} {\bibinfo {author} {\bibfnamefont {Alan}\ \bibnamefont {Morningstar}}, \bibinfo {author} {\bibfnamefont {Luis}\ \bibnamefont {Colmenarez}}, \bibinfo {author} {\bibfnamefont {Vedika}\ \bibnamefont {Khemani}}, \bibinfo {author} {\bibfnamefont {David~J.}\ \bibnamefont {Luitz}}, \ and\ \bibinfo {author} {\bibfnamefont {David~A.}\ \bibnamefont {Huse}},\ }\bibfield  {title} {\enquote {\bibinfo {title} {Avalanches and many-body resonances in many-body localized systems},}\ }\href {\doibase 10.1103/PhysRevB.105.174205} {\bibfield  {journal} {\bibinfo  {journal} {Phys. Rev. B}\ }\textbf {\bibinfo {volume} {105}},\ \bibinfo {pages} {174205} (\bibinfo {year} {2022})}\BibitemShut {NoStop}%
\bibitem [{\citenamefont {Abanin}\ \emph {et~al.}(2021)\citenamefont {Abanin}, \citenamefont {Bardarson}, \citenamefont {{De Tomasi}}, \citenamefont {Gopalakrishnan}, \citenamefont {Khemani}, \citenamefont {Parameswaran}, \citenamefont {Pollmann}, \citenamefont {Potter}, \citenamefont {Serbyn},\ and\ \citenamefont {Vasseur}}]{MBL_challenge21}%
  \BibitemOpen
  \bibfield  {author} {\bibinfo {author} {\bibfnamefont {D.A.}\ \bibnamefont {Abanin}}, \bibinfo {author} {\bibfnamefont {J.H.}\ \bibnamefont {Bardarson}}, \bibinfo {author} {\bibfnamefont {G.}~\bibnamefont {{De Tomasi}}}, \bibinfo {author} {\bibfnamefont {S.}~\bibnamefont {Gopalakrishnan}}, \bibinfo {author} {\bibfnamefont {V.}~\bibnamefont {Khemani}}, \bibinfo {author} {\bibfnamefont {S.A.}\ \bibnamefont {Parameswaran}}, \bibinfo {author} {\bibfnamefont {F.}~\bibnamefont {Pollmann}}, \bibinfo {author} {\bibfnamefont {A.C.}\ \bibnamefont {Potter}}, \bibinfo {author} {\bibfnamefont {M.}~\bibnamefont {Serbyn}}, \ and\ \bibinfo {author} {\bibfnamefont {R.}~\bibnamefont {Vasseur}},\ }\bibfield  {title} {\enquote {\bibinfo {title} {Distinguishing localization from chaos: Challenges in finite-size systems},}\ }\href {\doibase https://doi.org/10.1016/j.aop.2021.168415} {\bibfield  {journal} {\bibinfo  {journal} {Annals of Physics}\ }\textbf {\bibinfo {volume} {427}},\ \bibinfo {pages} {168415} (\bibinfo {year}
  {2021})}\BibitemShut {NoStop}%
\bibitem [{\citenamefont {{De Roeck}}\ \emph {et~al.}(2023)\citenamefont {{De Roeck}}, \citenamefont {Huveneers}, \citenamefont {Meeus},\ and\ \citenamefont {Prośniak}}]{MBL_DEROECK23}%
  \BibitemOpen
  \bibfield  {author} {\bibinfo {author} {\bibfnamefont {Wojciech}\ \bibnamefont {{De Roeck}}}, \bibinfo {author} {\bibfnamefont {François}\ \bibnamefont {Huveneers}}, \bibinfo {author} {\bibfnamefont {Branko}\ \bibnamefont {Meeus}}, \ and\ \bibinfo {author} {\bibfnamefont {A.~Oskar}\ \bibnamefont {Prośniak}},\ }\bibfield  {title} {\enquote {\bibinfo {title} {Rigorous and simple results on very slow thermalization, or quasi-localization, of the disordered quantum chain},}\ }\href {\doibase https://doi.org/10.1016/j.physa.2023.129245} {\bibfield  {journal} {\bibinfo  {journal} {Physica A: Statistical Mechanics and its Applications}\ }\textbf {\bibinfo {volume} {631}},\ \bibinfo {pages} {129245} (\bibinfo {year} {2023})}\BibitemShut {NoStop}%
\bibitem [{\citenamefont {Biroli}\ and\ \citenamefont {Tarzia}(2017)}]{tarzia}%
  \BibitemOpen
  \bibfield  {author} {\bibinfo {author} {\bibfnamefont {G.}~\bibnamefont {Biroli}}\ and\ \bibinfo {author} {\bibfnamefont {M.}~\bibnamefont {Tarzia}},\ }\bibfield  {title} {\enquote {\bibinfo {title} {Delocalized glassy dynamics and many-body localization},}\ }\href {\doibase 10.1103/PhysRevB.96.201114} {\bibfield  {journal} {\bibinfo  {journal} {Phys. Rev. B}\ }\textbf {\bibinfo {volume} {96}},\ \bibinfo {pages} {201114} (\bibinfo {year} {2017})}\BibitemShut {NoStop}%
\bibitem [{\citenamefont {Tikhonov}\ and\ \citenamefont {Mirlin}(2019)}]{RRG19}%
  \BibitemOpen
  \bibfield  {author} {\bibinfo {author} {\bibfnamefont {K.~S.}\ \bibnamefont {Tikhonov}}\ and\ \bibinfo {author} {\bibfnamefont {A.~D.}\ \bibnamefont {Mirlin}},\ }\bibfield  {title} {\enquote {\bibinfo {title} {Statistics of eigenstates near the localization transition on random regular graphs},}\ }\href {\doibase 10.1103/PhysRevB.99.024202} {\bibfield  {journal} {\bibinfo  {journal} {Phys. Rev. B}\ }\textbf {\bibinfo {volume} {99}},\ \bibinfo {pages} {024202} (\bibinfo {year} {2019})}\BibitemShut {NoStop}%
\bibitem [{\citenamefont {Tikhonov}\ and\ \citenamefont {Mirlin}(2021)}]{RRGrev21}%
  \BibitemOpen
  \bibfield  {author} {\bibinfo {author} {\bibfnamefont {K.S.}\ \bibnamefont {Tikhonov}}\ and\ \bibinfo {author} {\bibfnamefont {A.D.}\ \bibnamefont {Mirlin}},\ }\bibfield  {title} {\enquote {\bibinfo {title} {From anderson localization on random regular graphs to many-body localization},}\ }\href {\doibase https://doi.org/10.1016/j.aop.2021.168525} {\bibfield  {journal} {\bibinfo  {journal} {Annals of Physics}\ }\textbf {\bibinfo {volume} {435}},\ \bibinfo {pages} {168525} (\bibinfo {year} {2021})},\ \bibinfo {note} {special Issue on Localisation 2020}\BibitemShut {NoStop}%
\bibitem [{\citenamefont {\ifmmode~\check{S}\else \v{S}\fi{}untajs}\ and\ \citenamefont {Vidmar}(2022)}]{levvidmar}%
  \BibitemOpen
  \bibfield  {author} {\bibinfo {author} {\bibfnamefont {Jan}\ \bibnamefont {\ifmmode~\check{S}\else \v{S}\fi{}untajs}}\ and\ \bibinfo {author} {\bibfnamefont {Lev}\ \bibnamefont {Vidmar}},\ }\bibfield  {title} {\enquote {\bibinfo {title} {Ergodicity breaking transition in zero dimensions},}\ }\href {\doibase 10.1103/PhysRevLett.129.060602} {\bibfield  {journal} {\bibinfo  {journal} {Phys. Rev. Lett.}\ }\textbf {\bibinfo {volume} {129}},\ \bibinfo {pages} {060602} (\bibinfo {year} {2022})}\BibitemShut {NoStop}%
\bibitem [{\citenamefont {Abou-Chacra}\ \emph {et~al.}(1973)\citenamefont {Abou-Chacra}, \citenamefont {Thouless},\ and\ \citenamefont {Anderson}}]{abouchacra}%
  \BibitemOpen
  \bibfield  {author} {\bibinfo {author} {\bibfnamefont {R.}~\bibnamefont {Abou-Chacra}}, \bibinfo {author} {\bibfnamefont {D.~J.}\ \bibnamefont {Thouless}}, \ and\ \bibinfo {author} {\bibfnamefont {P.~W.}\ \bibnamefont {Anderson}},\ }\bibfield  {title} {\enquote {\bibinfo {title} {A selfconsistent theory of localization},}\ }\href@noop {} {\bibfield  {journal} {\bibinfo  {journal} {J. Phys.}\ }\textbf {\bibinfo {volume} {C6}},\ \bibinfo {pages} {1734} (\bibinfo {year} {1973})}\BibitemShut {NoStop}%
\bibitem [{\citenamefont {Fyodorov}\ and\ \citenamefont {Mirlin}(1991)}]{Mirlin_analytic91}%
  \BibitemOpen
  \bibfield  {author} {\bibinfo {author} {\bibfnamefont {Yan~V.}\ \bibnamefont {Fyodorov}}\ and\ \bibinfo {author} {\bibfnamefont {Alexander~D.}\ \bibnamefont {Mirlin}},\ }\bibfield  {title} {\enquote {\bibinfo {title} {Localization in ensemble of sparse random matrices},}\ }\href {\doibase 10.1103/PhysRevLett.67.2049} {\bibfield  {journal} {\bibinfo  {journal} {Phys. Rev. Lett.}\ }\textbf {\bibinfo {volume} {67}},\ \bibinfo {pages} {2049--2052} (\bibinfo {year} {1991})}\BibitemShut {NoStop}%
\bibitem [{\citenamefont {Biroli}\ \emph {et~al.}(2012)\citenamefont {Biroli}, \citenamefont {Ribeiro-Teixeira},\ and\ \citenamefont {Tarzia}}]{Bethe12}%
  \BibitemOpen
  \bibfield  {author} {\bibinfo {author} {\bibfnamefont {G}~\bibnamefont {Biroli}}, \bibinfo {author} {\bibfnamefont {AC}~\bibnamefont {Ribeiro-Teixeira}}, \ and\ \bibinfo {author} {\bibfnamefont {M}~\bibnamefont {Tarzia}},\ }\href@noop {} {\enquote {\bibinfo {title} {Difference between level statistics, ergodicity and localization transitions on the bethe lattice},}\ } (\bibinfo {year} {2012}),\ \Eprint {http://arxiv.org/abs/1211.7334} {arXiv:1211.7334 [cond-mat.dis-nn]} \BibitemShut {NoStop}%
\bibitem [{\citenamefont {De~Luca}\ \emph {et~al.}(2014)\citenamefont {De~Luca}, \citenamefont {Altshuler}, \citenamefont {Kravtsov},\ and\ \citenamefont {Scardicchio}}]{Bethe14}%
  \BibitemOpen
  \bibfield  {author} {\bibinfo {author} {\bibfnamefont {A.}~\bibnamefont {De~Luca}}, \bibinfo {author} {\bibfnamefont {B.~L.}\ \bibnamefont {Altshuler}}, \bibinfo {author} {\bibfnamefont {V.~E.}\ \bibnamefont {Kravtsov}}, \ and\ \bibinfo {author} {\bibfnamefont {A.}~\bibnamefont {Scardicchio}},\ }\bibfield  {title} {\enquote {\bibinfo {title} {Anderson localization on the bethe lattice: Nonergodicity of extended states},}\ }\href {\doibase 10.1103/PhysRevLett.113.046806} {\bibfield  {journal} {\bibinfo  {journal} {Phys. Rev. Lett.}\ }\textbf {\bibinfo {volume} {113}},\ \bibinfo {pages} {046806} (\bibinfo {year} {2014})}\BibitemShut {NoStop}%
\bibitem [{\citenamefont {Altshuler}\ \emph {et~al.}(2016)\citenamefont {Altshuler}, \citenamefont {Cuevas}, \citenamefont {Ioffe},\ and\ \citenamefont {Kravtsov}}]{RRG16}%
  \BibitemOpen
  \bibfield  {author} {\bibinfo {author} {\bibfnamefont {B.~L.}\ \bibnamefont {Altshuler}}, \bibinfo {author} {\bibfnamefont {E.}~\bibnamefont {Cuevas}}, \bibinfo {author} {\bibfnamefont {L.~B.}\ \bibnamefont {Ioffe}}, \ and\ \bibinfo {author} {\bibfnamefont {V.~E.}\ \bibnamefont {Kravtsov}},\ }\bibfield  {title} {\enquote {\bibinfo {title} {Nonergodic phases in strongly disordered random regular graphs},}\ }\href {\doibase 10.1103/PhysRevLett.117.156601} {\bibfield  {journal} {\bibinfo  {journal} {Phys. Rev. Lett.}\ }\textbf {\bibinfo {volume} {117}},\ \bibinfo {pages} {156601} (\bibinfo {year} {2016})}\BibitemShut {NoStop}%
\bibitem [{\citenamefont {Tikhonov}\ \emph {et~al.}(2016)\citenamefont {Tikhonov}, \citenamefont {Mirlin},\ and\ \citenamefont {Skvortsov}}]{RRG_finitesize16}%
  \BibitemOpen
  \bibfield  {author} {\bibinfo {author} {\bibfnamefont {K.~S.}\ \bibnamefont {Tikhonov}}, \bibinfo {author} {\bibfnamefont {A.~D.}\ \bibnamefont {Mirlin}}, \ and\ \bibinfo {author} {\bibfnamefont {M.~A.}\ \bibnamefont {Skvortsov}},\ }\bibfield  {title} {\enquote {\bibinfo {title} {Anderson localization and ergodicity on random regular graphs},}\ }\href {\doibase 10.1103/PhysRevB.94.220203} {\bibfield  {journal} {\bibinfo  {journal} {Phys. Rev. B}\ }\textbf {\bibinfo {volume} {94}},\ \bibinfo {pages} {220203} (\bibinfo {year} {2016})}\BibitemShut {NoStop}%
\bibitem [{\citenamefont {Sierant}\ \emph {et~al.}(2023)\citenamefont {Sierant}, \citenamefont {Lewenstein},\ and\ \citenamefont {Scardicchio}}]{RRG23}%
  \BibitemOpen
  \bibfield  {author} {\bibinfo {author} {\bibfnamefont {Piotr}\ \bibnamefont {Sierant}}, \bibinfo {author} {\bibfnamefont {Maciej}\ \bibnamefont {Lewenstein}}, \ and\ \bibinfo {author} {\bibfnamefont {Antonello}\ \bibnamefont {Scardicchio}},\ }\bibfield  {title} {\enquote {\bibinfo {title} {{Universality in Anderson localization on random graphs with varying connectivity}},}\ }\href {\doibase 10.21468/SciPostPhys.15.2.045} {\bibfield  {journal} {\bibinfo  {journal} {SciPost Phys.}\ }\textbf {\bibinfo {volume} {15}},\ \bibinfo {pages} {045} (\bibinfo {year} {2023})}\BibitemShut {NoStop}%
\bibitem [{\citenamefont {Imbrie}(2016)}]{Imbrie_2016}%
  \BibitemOpen
  \bibfield  {author} {\bibinfo {author} {\bibfnamefont {John~Z.}\ \bibnamefont {Imbrie}},\ }\bibfield  {title} {\enquote {\bibinfo {title} {On many-body localization for quantum spin chains},}\ }\href {\doibase 10.1007/s10955-016-1508-x} {\bibfield  {journal} {\bibinfo  {journal} {Journal of Statistical Physics}\ }\textbf {\bibinfo {volume} {163}},\ \bibinfo {pages} {998–1048} (\bibinfo {year} {2016})}\BibitemShut {NoStop}%
\bibitem [{\citenamefont {\ifmmode~\check{S}\else \v{S}\fi{}untajs}\ \emph {et~al.}(2020)\citenamefont {\ifmmode~\check{S}\else \v{S}\fi{}untajs}, \citenamefont {Bon\ifmmode~\check{c}\else \v{c}\fi{}a}, \citenamefont {Prosen},\ and\ \citenamefont {Vidmar}}]{vidmar}%
  \BibitemOpen
  \bibfield  {author} {\bibinfo {author} {\bibfnamefont {Jan}\ \bibnamefont {\ifmmode~\check{S}\else \v{S}\fi{}untajs}}, \bibinfo {author} {\bibfnamefont {Janez}\ \bibnamefont {Bon\ifmmode~\check{c}\else \v{c}\fi{}a}}, \bibinfo {author} {\bibfnamefont {Toma\ifmmode \check{z}\else~\v{z}\fi{}}\ \bibnamefont {Prosen}}, \ and\ \bibinfo {author} {\bibfnamefont {Lev}\ \bibnamefont {Vidmar}},\ }\bibfield  {title} {\enquote {\bibinfo {title} {Quantum chaos challenges many-body localization},}\ }\href {\doibase 10.1103/PhysRevE.102.062144} {\bibfield  {journal} {\bibinfo  {journal} {Phys. Rev. E}\ }\textbf {\bibinfo {volume} {102}},\ \bibinfo {pages} {062144} (\bibinfo {year} {2020})}\BibitemShut {NoStop}%
\bibitem [{\citenamefont {Sels}\ and\ \citenamefont {Polkovnikov}(2021)}]{selspolkovnikov}%
  \BibitemOpen
  \bibfield  {author} {\bibinfo {author} {\bibfnamefont {Dries}\ \bibnamefont {Sels}}\ and\ \bibinfo {author} {\bibfnamefont {Anatoli}\ \bibnamefont {Polkovnikov}},\ }\bibfield  {title} {\enquote {\bibinfo {title} {Dynamical obstruction to localization in a disordered spin chain},}\ }\href {\doibase 10.1103/PhysRevE.104.054105} {\bibfield  {journal} {\bibinfo  {journal} {Physical Review E}\ }\textbf {\bibinfo {volume} {104}},\ \bibinfo {pages} {054105} (\bibinfo {year} {2021})}\BibitemShut {NoStop}%
\bibitem [{\citenamefont {De~Roeck}\ and\ \citenamefont {Huveneers}(2017)}]{deRoeckHuveneers}%
  \BibitemOpen
  \bibfield  {author} {\bibinfo {author} {\bibfnamefont {Wojciech}\ \bibnamefont {De~Roeck}}\ and\ \bibinfo {author} {\bibfnamefont {Fran\ifmmode \mbox{\c{c}}\else~\c{c}\fi{}ois}\ \bibnamefont {Huveneers}},\ }\bibfield  {title} {\enquote {\bibinfo {title} {Stability and instability towards delocalization in many-body localization systems},}\ }\href {\doibase 10.1103/PhysRevB.95.155129} {\bibfield  {journal} {\bibinfo  {journal} {Phys. Rev. B}\ }\textbf {\bibinfo {volume} {95}},\ \bibinfo {pages} {155129} (\bibinfo {year} {2017})}\BibitemShut {NoStop}%
\bibitem [{\citenamefont {Gopalakrishnan}\ and\ \citenamefont {Huse}(2019)}]{GopalakrishnanHuse}%
  \BibitemOpen
  \bibfield  {author} {\bibinfo {author} {\bibfnamefont {Sarang}\ \bibnamefont {Gopalakrishnan}}\ and\ \bibinfo {author} {\bibfnamefont {David~A.}\ \bibnamefont {Huse}},\ }\bibfield  {title} {\enquote {\bibinfo {title} {Instability of many-body localized systems as a phase transition in a nonstandard thermodynamic limit},}\ }\href {\doibase 10.1103/PhysRevB.99.134305} {\bibfield  {journal} {\bibinfo  {journal} {Phys. Rev. B}\ }\textbf {\bibinfo {volume} {99}},\ \bibinfo {pages} {134305} (\bibinfo {year} {2019})}\BibitemShut {NoStop}%
\bibitem [{\citenamefont {Long}\ \emph {et~al.}(2023)\citenamefont {Long}, \citenamefont {Crowley}, \citenamefont {Khemani},\ and\ \citenamefont {Chandran}}]{preth_MBL23}%
  \BibitemOpen
  \bibfield  {author} {\bibinfo {author} {\bibfnamefont {David~M.}\ \bibnamefont {Long}}, \bibinfo {author} {\bibfnamefont {Philip J.~D.}\ \bibnamefont {Crowley}}, \bibinfo {author} {\bibfnamefont {Vedika}\ \bibnamefont {Khemani}}, \ and\ \bibinfo {author} {\bibfnamefont {Anushya}\ \bibnamefont {Chandran}},\ }\bibfield  {title} {\enquote {\bibinfo {title} {Phenomenology of the prethermal many-body localized regime},}\ }\href {\doibase 10.1103/PhysRevLett.131.106301} {\bibfield  {journal} {\bibinfo  {journal} {Phys. Rev. Lett.}\ }\textbf {\bibinfo {volume} {131}},\ \bibinfo {pages} {106301} (\bibinfo {year} {2023})}\BibitemShut {NoStop}%
\bibitem [{\citenamefont {Ha}\ \emph {et~al.}(2023)\citenamefont {Ha}, \citenamefont {Morningstar},\ and\ \citenamefont {Huse}}]{MBL_resonance_Huse23}%
  \BibitemOpen
  \bibfield  {author} {\bibinfo {author} {\bibfnamefont {Hyunsoo}\ \bibnamefont {Ha}}, \bibinfo {author} {\bibfnamefont {Alan}\ \bibnamefont {Morningstar}}, \ and\ \bibinfo {author} {\bibfnamefont {David~A.}\ \bibnamefont {Huse}},\ }\bibfield  {title} {\enquote {\bibinfo {title} {Many-body resonances in the avalanche instability of many-body localization},}\ }\href {\doibase 10.1103/PhysRevLett.130.250405} {\bibfield  {journal} {\bibinfo  {journal} {Phys. Rev. Lett.}\ }\textbf {\bibinfo {volume} {130}},\ \bibinfo {pages} {250405} (\bibinfo {year} {2023})}\BibitemShut {NoStop}%
\bibitem [{\citenamefont {Schreiber}\ \emph {et~al.}(2015)\citenamefont {Schreiber}, \citenamefont {Hodgman}, \citenamefont {Bordia}, \citenamefont {L{\"u}schen}, \citenamefont {Fischer}, \citenamefont {Vosk}, \citenamefont {Altman}, \citenamefont {Schneider},\ and\ \citenamefont {Bloch}}]{bloch}%
  \BibitemOpen
  \bibfield  {author} {\bibinfo {author} {\bibfnamefont {Michael}\ \bibnamefont {Schreiber}}, \bibinfo {author} {\bibfnamefont {Sean~S}\ \bibnamefont {Hodgman}}, \bibinfo {author} {\bibfnamefont {Pranjal}\ \bibnamefont {Bordia}}, \bibinfo {author} {\bibfnamefont {Henrik~P}\ \bibnamefont {L{\"u}schen}}, \bibinfo {author} {\bibfnamefont {Mark~H}\ \bibnamefont {Fischer}}, \bibinfo {author} {\bibfnamefont {Ronen}\ \bibnamefont {Vosk}}, \bibinfo {author} {\bibfnamefont {Ehud}\ \bibnamefont {Altman}}, \bibinfo {author} {\bibfnamefont {Ulrich}\ \bibnamefont {Schneider}}, \ and\ \bibinfo {author} {\bibfnamefont {Immanuel}\ \bibnamefont {Bloch}},\ }\bibfield  {title} {\enquote {\bibinfo {title} {Observation of many-body localization of interacting fermions in a quasirandom optical lattice},}\ }\href@noop {} {\bibfield  {journal} {\bibinfo  {journal} {Science}\ }\textbf {\bibinfo {volume} {349}},\ \bibinfo {pages} {842--845} (\bibinfo {year} {2015})}\BibitemShut {NoStop}%
\bibitem [{\citenamefont {Choi}\ \emph {et~al.}(2016)\citenamefont {Choi}, \citenamefont {Hild}, \citenamefont {Zeiher}, \citenamefont {Schauß}, \citenamefont {Rubio-Abadal}, \citenamefont {Yefsah}, \citenamefont {Khemani}, \citenamefont {Huse}, \citenamefont {Bloch},\ and\ \citenamefont {Gross}}]{Choi_2016}%
  \BibitemOpen
  \bibfield  {author} {\bibinfo {author} {\bibfnamefont {Jae-yoon}\ \bibnamefont {Choi}}, \bibinfo {author} {\bibfnamefont {Sebastian}\ \bibnamefont {Hild}}, \bibinfo {author} {\bibfnamefont {Johannes}\ \bibnamefont {Zeiher}}, \bibinfo {author} {\bibfnamefont {Peter}\ \bibnamefont {Schauß}}, \bibinfo {author} {\bibfnamefont {Antonio}\ \bibnamefont {Rubio-Abadal}}, \bibinfo {author} {\bibfnamefont {Tarik}\ \bibnamefont {Yefsah}}, \bibinfo {author} {\bibfnamefont {Vedika}\ \bibnamefont {Khemani}}, \bibinfo {author} {\bibfnamefont {David~A.}\ \bibnamefont {Huse}}, \bibinfo {author} {\bibfnamefont {Immanuel}\ \bibnamefont {Bloch}}, \ and\ \bibinfo {author} {\bibfnamefont {Christian}\ \bibnamefont {Gross}},\ }\bibfield  {title} {\enquote {\bibinfo {title} {Exploring the many-body localization transition in two dimensions},}\ }\href {\doibase 10.1126/science.aaf8834} {\bibfield  {journal} {\bibinfo  {journal} {Science}\ }\textbf {\bibinfo {volume} {352}},\ \bibinfo {pages} {1547–1552} (\bibinfo {year}
  {2016})}\BibitemShut {NoStop}%
\bibitem [{\citenamefont {Kondov}\ \emph {et~al.}(2015)\citenamefont {Kondov}, \citenamefont {McGehee}, \citenamefont {Xu},\ and\ \citenamefont {DeMarco}}]{DeMarco}%
  \BibitemOpen
  \bibfield  {author} {\bibinfo {author} {\bibfnamefont {S.~S.}\ \bibnamefont {Kondov}}, \bibinfo {author} {\bibfnamefont {W.~R.}\ \bibnamefont {McGehee}}, \bibinfo {author} {\bibfnamefont {W.}~\bibnamefont {Xu}}, \ and\ \bibinfo {author} {\bibfnamefont {B.}~\bibnamefont {DeMarco}},\ }\bibfield  {title} {\enquote {\bibinfo {title} {Disorder-induced localization in a strongly correlated atomic hubbard gas},}\ }\href {\doibase 10.1103/PhysRevLett.114.083002} {\bibfield  {journal} {\bibinfo  {journal} {Phys. Rev. Lett.}\ }\textbf {\bibinfo {volume} {114}},\ \bibinfo {pages} {083002} (\bibinfo {year} {2015})}\BibitemShut {NoStop}%
\bibitem [{\citenamefont {Sipser}\ and\ \citenamefont {Spielman}(1996)}]{Sipser_1996}%
  \BibitemOpen
  \bibfield  {author} {\bibinfo {author} {\bibfnamefont {M.}~\bibnamefont {Sipser}}\ and\ \bibinfo {author} {\bibfnamefont {D.A.}\ \bibnamefont {Spielman}},\ }\bibfield  {title} {\enquote {\bibinfo {title} {Expander codes},}\ }\href {\doibase 10.1109/18.556667} {\bibfield  {journal} {\bibinfo  {journal} {IEEE Transactions on Information Theory}\ }\textbf {\bibinfo {volume} {42}},\ \bibinfo {pages} {1710--1722} (\bibinfo {year} {1996})}\BibitemShut {NoStop}%
\bibitem [{\citenamefont {Baldwin}\ \emph {et~al.}(2017)\citenamefont {Baldwin}, \citenamefont {Laumann}, \citenamefont {Pal},\ and\ \citenamefont {Scardicchio}}]{baldwin}%
  \BibitemOpen
  \bibfield  {author} {\bibinfo {author} {\bibfnamefont {C.~L.}\ \bibnamefont {Baldwin}}, \bibinfo {author} {\bibfnamefont {C.~R.}\ \bibnamefont {Laumann}}, \bibinfo {author} {\bibfnamefont {A.}~\bibnamefont {Pal}}, \ and\ \bibinfo {author} {\bibfnamefont {A.}~\bibnamefont {Scardicchio}},\ }\bibfield  {title} {\enquote {\bibinfo {title} {Clustering of nonergodic eigenstates in quantum spin glasses},}\ }\href {\doibase 10.1103/PhysRevLett.118.127201} {\bibfield  {journal} {\bibinfo  {journal} {Phys. Rev. Lett.}\ }\textbf {\bibinfo {volume} {118}},\ \bibinfo {pages} {127201} (\bibinfo {year} {2017})}\BibitemShut {NoStop}%
\bibitem [{\citenamefont {Baldwin}\ and\ \citenamefont {Laumann}(2018)}]{baldwin2}%
  \BibitemOpen
  \bibfield  {author} {\bibinfo {author} {\bibfnamefont {C.~L.}\ \bibnamefont {Baldwin}}\ and\ \bibinfo {author} {\bibfnamefont {C.~R.}\ \bibnamefont {Laumann}},\ }\bibfield  {title} {\enquote {\bibinfo {title} {Quantum algorithm for energy matching in hard optimization problems},}\ }\href {\doibase 10.1103/PhysRevB.97.224201} {\bibfield  {journal} {\bibinfo  {journal} {Phys. Rev. B}\ }\textbf {\bibinfo {volume} {97}},\ \bibinfo {pages} {224201} (\bibinfo {year} {2018})}\BibitemShut {NoStop}%
\bibitem [{\citenamefont {Leschke}\ \emph {et~al.}(2021)\citenamefont {Leschke}, \citenamefont {Manai}, \citenamefont {Ruder},\ and\ \citenamefont {Warzel}}]{leschke}%
  \BibitemOpen
  \bibfield  {author} {\bibinfo {author} {\bibfnamefont {Hajo}\ \bibnamefont {Leschke}}, \bibinfo {author} {\bibfnamefont {Chokri}\ \bibnamefont {Manai}}, \bibinfo {author} {\bibfnamefont {Rainer}\ \bibnamefont {Ruder}}, \ and\ \bibinfo {author} {\bibfnamefont {Simone}\ \bibnamefont {Warzel}},\ }\bibfield  {title} {\enquote {\bibinfo {title} {Existence of replica-symmetry breaking in quantum glasses},}\ }\href {\doibase 10.1103/PhysRevLett.127.207204} {\bibfield  {journal} {\bibinfo  {journal} {Phys. Rev. Lett.}\ }\textbf {\bibinfo {volume} {127}},\ \bibinfo {pages} {207204} (\bibinfo {year} {2021})}\BibitemShut {NoStop}%
\bibitem [{\citenamefont {Winer}\ \emph {et~al.}(2022)\citenamefont {Winer}, \citenamefont {Barney}, \citenamefont {Baldwin}, \citenamefont {Galitski},\ and\ \citenamefont {Swingle}}]{Winer:2022ciz}%
  \BibitemOpen
  \bibfield  {author} {\bibinfo {author} {\bibfnamefont {Michael}\ \bibnamefont {Winer}}, \bibinfo {author} {\bibfnamefont {Richard}\ \bibnamefont {Barney}}, \bibinfo {author} {\bibfnamefont {Christopher~L.}\ \bibnamefont {Baldwin}}, \bibinfo {author} {\bibfnamefont {Victor}\ \bibnamefont {Galitski}}, \ and\ \bibinfo {author} {\bibfnamefont {Brian}\ \bibnamefont {Swingle}},\ }\bibfield  {title} {\enquote {\bibinfo {title} {{Spectral form factor of a quantum spin glass}},}\ }\href {\doibase 10.1007/JHEP09(2022)032} {\bibfield  {journal} {\bibinfo  {journal} {JHEP}\ }\textbf {\bibinfo {volume} {09}},\ \bibinfo {pages} {032} (\bibinfo {year} {2022})},\ \Eprint {http://arxiv.org/abs/2203.12753} {arXiv:2203.12753 [cond-mat.stat-mech]} \BibitemShut {NoStop}%
\bibitem [{\citenamefont {Aizenman}\ and\ \citenamefont {Warzel}(2009)}]{warzel}%
  \BibitemOpen
  \bibfield  {author} {\bibinfo {author} {\bibfnamefont {Michael}\ \bibnamefont {Aizenman}}\ and\ \bibinfo {author} {\bibfnamefont {Simone}\ \bibnamefont {Warzel}},\ }\bibfield  {title} {\enquote {\bibinfo {title} {Localization bounds for multiparticle systems},}\ }\href {\doibase 10.1007/s00220-009-0792-6} {\bibfield  {journal} {\bibinfo  {journal} {Communications in Mathematical Physics}\ }\textbf {\bibinfo {volume} {290}},\ \bibinfo {pages} {903–934} (\bibinfo {year} {2009})}\BibitemShut {NoStop}%
\bibitem [{\citenamefont {Panteleev}\ and\ \citenamefont {Kalachev}(2021)}]{Panteleev_2021}%
  \BibitemOpen
  \bibfield  {author} {\bibinfo {author} {\bibfnamefont {Pavel}\ \bibnamefont {Panteleev}}\ and\ \bibinfo {author} {\bibfnamefont {Gleb}\ \bibnamefont {Kalachev}},\ }\bibfield  {title} {\enquote {\bibinfo {title} {{Asymptotically good Quantum and locally testable classical LDPC codes}},}\ }in\ \href {\doibase 10.1145/3519935.3520017} {\emph {\bibinfo {booktitle} {{54th Annual ACM Symposium on Theory of Computing}}}}\ (\bibinfo {year} {2021})\ \Eprint {http://arxiv.org/abs/2111.03654} {arXiv:2111.03654 [cs.IT]} \BibitemShut {NoStop}%
\bibitem [{\citenamefont {Dinur}\ \emph {et~al.}(2022{\natexlab{a}})\citenamefont {Dinur}, \citenamefont {Hsieh}, \citenamefont {Lin},\ and\ \citenamefont {Vidick}}]{dinur2022good}%
  \BibitemOpen
  \bibfield  {author} {\bibinfo {author} {\bibfnamefont {Irit}\ \bibnamefont {Dinur}}, \bibinfo {author} {\bibfnamefont {Min-Hsiu}\ \bibnamefont {Hsieh}}, \bibinfo {author} {\bibfnamefont {Ting-Chun}\ \bibnamefont {Lin}}, \ and\ \bibinfo {author} {\bibfnamefont {Thomas}\ \bibnamefont {Vidick}},\ }\href@noop {} {\enquote {\bibinfo {title} {Good quantum ldpc codes with linear time decoders},}\ } (\bibinfo {year} {2022}{\natexlab{a}}),\ \Eprint {http://arxiv.org/abs/2206.07750} {arXiv:2206.07750 [quant-ph]} \BibitemShut {NoStop}%
\bibitem [{\citenamefont {Dinur}\ \emph {et~al.}(2022{\natexlab{b}})\citenamefont {Dinur}, \citenamefont {Evra}, \citenamefont {Livne}, \citenamefont {Lubotzky},\ and\ \citenamefont {Mozes}}]{dinur2022LTC}%
  \BibitemOpen
  \bibfield  {author} {\bibinfo {author} {\bibfnamefont {Irit}\ \bibnamefont {Dinur}}, \bibinfo {author} {\bibfnamefont {Shai}\ \bibnamefont {Evra}}, \bibinfo {author} {\bibfnamefont {Ron}\ \bibnamefont {Livne}}, \bibinfo {author} {\bibfnamefont {Alexander}\ \bibnamefont {Lubotzky}}, \ and\ \bibinfo {author} {\bibfnamefont {Shahar}\ \bibnamefont {Mozes}},\ }\href@noop {} {\enquote {\bibinfo {title} {Good locally testable codes},}\ } (\bibinfo {year} {2022}{\natexlab{b}}),\ \Eprint {http://arxiv.org/abs/2207.11929} {arXiv:2207.11929 [cs.IT]} \BibitemShut {NoStop}%
\bibitem [{\citenamefont {Lin}\ and\ \citenamefont {Hsieh}(2022)}]{lin2022}%
  \BibitemOpen
  \bibfield  {author} {\bibinfo {author} {\bibfnamefont {Ting-Chun}\ \bibnamefont {Lin}}\ and\ \bibinfo {author} {\bibfnamefont {Min-Hsiu}\ \bibnamefont {Hsieh}},\ }\href@noop {} {\enquote {\bibinfo {title} {$c^3$-locally testable codes from lossless expanders},}\ } (\bibinfo {year} {2022}),\ \Eprint {http://arxiv.org/abs/2201.11369} {arXiv:2201.11369 [cs.IT]} \BibitemShut {NoStop}%
\bibitem [{Note1()}]{Note1}%
  \BibitemOpen
  \bibinfo {note} {SM also contains formal statements and proofs.}\BibitemShut {Stop}%
\bibitem [{\citenamefont {Montanari}\ and\ \citenamefont {Semerjian}(2006)}]{Montanari_2006}%
  \BibitemOpen
  \bibfield  {author} {\bibinfo {author} {\bibfnamefont {Andrea}\ \bibnamefont {Montanari}}\ and\ \bibinfo {author} {\bibfnamefont {Guilhem}\ \bibnamefont {Semerjian}},\ }\bibfield  {title} {\enquote {\bibinfo {title} {On the dynamics of the glass transition on bethe lattices},}\ }\href {\doibase 10.1007/s10955-006-9103-1} {\bibfield  {journal} {\bibinfo  {journal} {Journal of Statistical Physics}\ }\textbf {\bibinfo {volume} {124}},\ \bibinfo {pages} {103–189} (\bibinfo {year} {2006})}\BibitemShut {NoStop}%
\bibitem [{\citenamefont {Hong}\ \emph {et~al.}(2024)\citenamefont {Hong}, \citenamefont {Guo},\ and\ \citenamefont {Lucas}}]{Hong:2024vlr}%
  \BibitemOpen
  \bibfield  {author} {\bibinfo {author} {\bibfnamefont {Yifan}\ \bibnamefont {Hong}}, \bibinfo {author} {\bibfnamefont {Jinkang}\ \bibnamefont {Guo}}, \ and\ \bibinfo {author} {\bibfnamefont {Andrew}\ \bibnamefont {Lucas}},\ }\bibfield  {title} {\enquote {\bibinfo {title} {{Quantum memory at nonzero temperature in a thermodynamically trivial system}},}\ }\href@noop {} {\  (\bibinfo {year} {2024})},\ \Eprint {http://arxiv.org/abs/2403.10599} {arXiv:2403.10599 [quant-ph]} \BibitemShut {NoStop}%
\bibitem [{\citenamefont {Coleman}(1977)}]{coleman}%
  \BibitemOpen
  \bibfield  {author} {\bibinfo {author} {\bibfnamefont {Sidney}\ \bibnamefont {Coleman}},\ }\bibfield  {title} {\enquote {\bibinfo {title} {Fate of the false vacuum: Semiclassical theory},}\ }\href {\doibase 10.1103/PhysRevD.15.2929} {\bibfield  {journal} {\bibinfo  {journal} {Phys. Rev. D}\ }\textbf {\bibinfo {volume} {15}},\ \bibinfo {pages} {2929--2936} (\bibinfo {year} {1977})}\BibitemShut {NoStop}%
\bibitem [{\citenamefont {Filoche}\ and\ \citenamefont {Maybodorova}(2012)}]{maybodorova}%
  \BibitemOpen
  \bibfield  {author} {\bibinfo {author} {\bibfnamefont {M.}~\bibnamefont {Filoche}}\ and\ \bibinfo {author} {\bibfnamefont {S.}~\bibnamefont {Maybodorova}},\ }\bibfield  {title} {\enquote {\bibinfo {title} {Universal mechanism for anderson and weak localization},}\ }\href@noop {} {\bibfield  {journal} {\bibinfo  {journal} {Proc. Nat. Acad. Sci}\ }\textbf {\bibinfo {volume} {109}},\ \bibinfo {pages} {14761} (\bibinfo {year} {2012})}\BibitemShut {NoStop}%
\bibitem [{\citenamefont {Balasubramanian}\ \emph {et~al.}(2020)\citenamefont {Balasubramanian}, \citenamefont {Liao},\ and\ \citenamefont {Galitski}}]{galitski}%
  \BibitemOpen
  \bibfield  {author} {\bibinfo {author} {\bibfnamefont {Shankar}\ \bibnamefont {Balasubramanian}}, \bibinfo {author} {\bibfnamefont {Yunxiang}\ \bibnamefont {Liao}}, \ and\ \bibinfo {author} {\bibfnamefont {Victor}\ \bibnamefont {Galitski}},\ }\bibfield  {title} {\enquote {\bibinfo {title} {Many-body localization landscape},}\ }\href {\doibase 10.1103/PhysRevB.101.014201} {\bibfield  {journal} {\bibinfo  {journal} {Phys. Rev. B}\ }\textbf {\bibinfo {volume} {101}},\ \bibinfo {pages} {014201} (\bibinfo {year} {2020})}\BibitemShut {NoStop}%
\bibitem [{\citenamefont {Altshuler}\ \emph {et~al.}(2010)\citenamefont {Altshuler}, \citenamefont {Krovi},\ and\ \citenamefont {Roland}}]{Altshuler_2010}%
  \BibitemOpen
  \bibfield  {author} {\bibinfo {author} {\bibfnamefont {Boris}\ \bibnamefont {Altshuler}}, \bibinfo {author} {\bibfnamefont {Hari}\ \bibnamefont {Krovi}}, \ and\ \bibinfo {author} {\bibfnamefont {Jérémie}\ \bibnamefont {Roland}},\ }\bibfield  {title} {\enquote {\bibinfo {title} {Anderson localization makes adiabatic quantum optimization fail},}\ }\href {\doibase 10.1073/pnas.1002116107} {\bibfield  {journal} {\bibinfo  {journal} {Proceedings of the National Academy of Sciences}\ }\textbf {\bibinfo {volume} {107}},\ \bibinfo {pages} {12446–12450} (\bibinfo {year} {2010})}\BibitemShut {NoStop}%
\bibitem [{\citenamefont {Bohigas}\ \emph {et~al.}(1984)\citenamefont {Bohigas}, \citenamefont {Giannoni},\ and\ \citenamefont {Schmit}}]{bohigas}%
  \BibitemOpen
  \bibfield  {author} {\bibinfo {author} {\bibfnamefont {O.}~\bibnamefont {Bohigas}}, \bibinfo {author} {\bibfnamefont {M.~J.}\ \bibnamefont {Giannoni}}, \ and\ \bibinfo {author} {\bibfnamefont {C.}~\bibnamefont {Schmit}},\ }\bibfield  {title} {\enquote {\bibinfo {title} {Characterization of chaotic quantum spectra and universality of level fluctuation laws},}\ }\href {\doibase 10.1103/PhysRevLett.52.1} {\bibfield  {journal} {\bibinfo  {journal} {Phys. Rev. Lett.}\ }\textbf {\bibinfo {volume} {52}},\ \bibinfo {pages} {1--4} (\bibinfo {year} {1984})}\BibitemShut {NoStop}%
\bibitem [{\citenamefont {Nandkishore}\ and\ \citenamefont {Sondhi}(2017)}]{LRMBL}%
  \BibitemOpen
  \bibfield  {author} {\bibinfo {author} {\bibfnamefont {Rahul~M.}\ \bibnamefont {Nandkishore}}\ and\ \bibinfo {author} {\bibfnamefont {S.~L.}\ \bibnamefont {Sondhi}},\ }\bibfield  {title} {\enquote {\bibinfo {title} {Many-body localization with long-range interactions},}\ }\href {\doibase 10.1103/PhysRevX.7.041021} {\bibfield  {journal} {\bibinfo  {journal} {Phys. Rev. X}\ }\textbf {\bibinfo {volume} {7}},\ \bibinfo {pages} {041021} (\bibinfo {year} {2017})}\BibitemShut {NoStop}%
\bibitem [{\citenamefont {Akhtar}\ \emph {et~al.}(2018)\citenamefont {Akhtar}, \citenamefont {Nandkishore},\ and\ \citenamefont {Sondhi}}]{LRMBLnumerics}%
  \BibitemOpen
  \bibfield  {author} {\bibinfo {author} {\bibfnamefont {A.~A.}\ \bibnamefont {Akhtar}}, \bibinfo {author} {\bibfnamefont {Rahul~M.}\ \bibnamefont {Nandkishore}}, \ and\ \bibinfo {author} {\bibfnamefont {S.~L.}\ \bibnamefont {Sondhi}},\ }\bibfield  {title} {\enquote {\bibinfo {title} {Symmetry breaking and localization in a random schwinger model with commensuration},}\ }\href {\doibase 10.1103/PhysRevB.98.115109} {\bibfield  {journal} {\bibinfo  {journal} {Phys. Rev. B}\ }\textbf {\bibinfo {volume} {98}},\ \bibinfo {pages} {115109} (\bibinfo {year} {2018})}\BibitemShut {NoStop}%
\bibitem [{\citenamefont {Geraedts}\ \emph {et~al.}(2017)\citenamefont {Geraedts}, \citenamefont {Bhatt},\ and\ \citenamefont {Nandkishore}}]{geraedts}%
  \BibitemOpen
  \bibfield  {author} {\bibinfo {author} {\bibfnamefont {Scott~D.}\ \bibnamefont {Geraedts}}, \bibinfo {author} {\bibfnamefont {R.~N.}\ \bibnamefont {Bhatt}}, \ and\ \bibinfo {author} {\bibfnamefont {Rahul}\ \bibnamefont {Nandkishore}},\ }\bibfield  {title} {\enquote {\bibinfo {title} {Emergent local integrals of motion without a complete set of localized eigenstates},}\ }\href {\doibase 10.1103/PhysRevB.95.064204} {\bibfield  {journal} {\bibinfo  {journal} {Phys. Rev. B}\ }\textbf {\bibinfo {volume} {95}},\ \bibinfo {pages} {064204} (\bibinfo {year} {2017})}\BibitemShut {NoStop}%
\bibitem [{\citenamefont {Zhao}\ \emph {et~al.}(2014)\citenamefont {Zhao}, \citenamefont {Kerridge},\ and\ \citenamefont {Huse}}]{bozhao}%
  \BibitemOpen
  \bibfield  {author} {\bibinfo {author} {\bibfnamefont {Bo}~\bibnamefont {Zhao}}, \bibinfo {author} {\bibfnamefont {Merritt~C.}\ \bibnamefont {Kerridge}}, \ and\ \bibinfo {author} {\bibfnamefont {David~A.}\ \bibnamefont {Huse}},\ }\bibfield  {title} {\enquote {\bibinfo {title} {Three species of schr\"odinger cat states in an infinite-range spin model},}\ }\href {\doibase 10.1103/PhysRevE.90.022104} {\bibfield  {journal} {\bibinfo  {journal} {Phys. Rev. E}\ }\textbf {\bibinfo {volume} {90}},\ \bibinfo {pages} {022104} (\bibinfo {year} {2014})}\BibitemShut {NoStop}%
\bibitem [{\citenamefont {Fratus}\ and\ \citenamefont {Srednicki}(2015)}]{fratus}%
  \BibitemOpen
  \bibfield  {author} {\bibinfo {author} {\bibfnamefont {Keith~R.}\ \bibnamefont {Fratus}}\ and\ \bibinfo {author} {\bibfnamefont {Mark}\ \bibnamefont {Srednicki}},\ }\bibfield  {title} {\enquote {\bibinfo {title} {Eigenstate thermalization in systems with spontaneously broken symmetry},}\ }\href {\doibase 10.1103/PhysRevE.92.040103} {\bibfield  {journal} {\bibinfo  {journal} {Phys. Rev. E}\ }\textbf {\bibinfo {volume} {92}},\ \bibinfo {pages} {040103} (\bibinfo {year} {2015})}\BibitemShut {NoStop}%
\bibitem [{\citenamefont {Winer}\ and\ \citenamefont {Swingle}(2022)}]{winer}%
  \BibitemOpen
  \bibfield  {author} {\bibinfo {author} {\bibfnamefont {Michael}\ \bibnamefont {Winer}}\ and\ \bibinfo {author} {\bibfnamefont {Brian}\ \bibnamefont {Swingle}},\ }\bibfield  {title} {\enquote {\bibinfo {title} {Spontaneous symmetry breaking, spectral statistics, and the ramp},}\ }\href {\doibase 10.1103/PhysRevB.105.104509} {\bibfield  {journal} {\bibinfo  {journal} {Phys. Rev. B}\ }\textbf {\bibinfo {volume} {105}},\ \bibinfo {pages} {104509} (\bibinfo {year} {2022})}\BibitemShut {NoStop}%
\bibitem [{\citenamefont {Rakovszky}\ and\ \citenamefont {Khemani}(2023)}]{Rakovszky:2023fng}%
  \BibitemOpen
  \bibfield  {author} {\bibinfo {author} {\bibfnamefont {Tibor}\ \bibnamefont {Rakovszky}}\ and\ \bibinfo {author} {\bibfnamefont {Vedika}\ \bibnamefont {Khemani}},\ }\bibfield  {title} {\enquote {\bibinfo {title} {{The Physics of (good) LDPC Codes I. Gauging and dualities}},}\ }\href@noop {} {\  (\bibinfo {year} {2023})},\ \Eprint {http://arxiv.org/abs/2310.16032} {arXiv:2310.16032 [quant-ph]} \BibitemShut {NoStop}%
\bibitem [{\citenamefont {Rakovszky}\ and\ \citenamefont {Khemani}(2024)}]{Rakovszky:2024iks}%
  \BibitemOpen
  \bibfield  {author} {\bibinfo {author} {\bibfnamefont {Tibor}\ \bibnamefont {Rakovszky}}\ and\ \bibinfo {author} {\bibfnamefont {Vedika}\ \bibnamefont {Khemani}},\ }\bibfield  {title} {\enquote {\bibinfo {title} {{The Physics of (good) LDPC Codes II. Product constructions}},}\ }\href@noop {} {\  (\bibinfo {year} {2024})},\ \Eprint {http://arxiv.org/abs/2402.16831} {arXiv:2402.16831 [quant-ph]} \BibitemShut {NoStop}%
\bibitem [{\citenamefont {Breuckmann}\ \emph {et~al.}(to appear)\citenamefont {Breuckmann}, \citenamefont {Khemani}, \citenamefont {Placke},\ and\ \citenamefont {Rakovszky}}]{Breuckmann_2024}%
  \BibitemOpen
  \bibfield  {author} {\bibinfo {author} {\bibfnamefont {Nikolas~P.}\ \bibnamefont {Breuckmann}}, \bibinfo {author} {\bibfnamefont {Vedika}\ \bibnamefont {Khemani}}, \bibinfo {author} {\bibfnamefont {Benedikt}\ \bibnamefont {Placke}}, \ and\ \bibinfo {author} {\bibfnamefont {Tibor}\ \bibnamefont {Rakovszky}},\ }\href@noop {} {\enquote {\bibinfo {title} {Thermodynamics and dynamics of good ldpc codes: Extensive energy barriers and glassiness},}\ } (\bibinfo {year} {to appear})\BibitemShut {NoStop}%
\end{thebibliography}%

\onecolumngrid

\newpage

\setcounter{equation}{0}
\setcounter{page}{1}
\renewcommand{\theequation}{S\arabic{equation}}
\renewcommand{\thesubsection}{S\arabic{subsection}}
\renewcommand{\thepage}{S\arabic{page}}

\begin{center}
    {\large \textbf{Supplementary Material}}
\end{center}


\subsection{Formal statement of the main theorem}
We follow the notation of the main text.  Given that the perturbation $V$ is $\Delta'$-local, we define $\Delta:=\Delta'q$, and
choose the cutoff energy $N_*$ by \begin{equation}\label{eq:well<D}
         N_*:= \Delta n_*,
\end{equation}
where \begin{equation}
    n_*:=\left\lfloor \frac{\alpha(D-\Delta'-1)}{2\Delta} \right\rfloor,
\end{equation}
and $\lfloor\cdot \rfloor$ is the floor function.
Defining $D=d_0 N$ and $K=k_0 N$ for some O(1) $0<d_0,k_0<1$, we note that for sufficiently large $N$, \begin{equation}
    \frac{N_*}{N} = \mu > \frac{\alpha d_0}{3}.
\end{equation}
We then define projection operators onto one well of codeword $\bz$: \begin{equation}\label{eq:Pz=}
    P_{\bz} := \sum_{\bs:|\bs-\bz|\le (N_*-1)/\alpha} \ket{\bs} \bra{\bs}.
\end{equation}
Because the codewords have Hamming distance $D$,
these projectors are orthogonal to each other: $P_{\bz}P_{\bz'}=0 \quad (\forall \bz\neq \bz')$ because two bitstrings $\bs,\bs'$ in two wells satisfy \begin{equation}\label{eq:s-s'>D}
    \abs{\bs-\bs'}\ge \abs{\bz-\bz'} - |\bs-\bz|-|\bs'-\bz'|\ge D-2\frac{D-\Delta'-1}{2} \ge \Delta'+1>0.
\end{equation}
I.e. the Hilbert subspace of each well is orthogonal to those of other wells. In fact, \eqref{eq:s-s'>D} implies that $V$ does not couple the wells directly: for all $\bz\neq \bz'$, \begin{equation}\label{eq:V_connect_z}
    P_{\bz}VP_{\bz'}=0 .
\end{equation}
With these facts collected, we can state the main result formally:

\begin{thm}\label{thm:main}
     Let $H$ be defined by (\ref{eq:H}).  Suppose $V$ is $\Delta'$-local with \begin{equation}\label{eq:V<eps}
         \norm{V}\le \epsilon N,
     \end{equation}
     where the perturbation strength is bounded by \begin{equation}\label{eq:eps<N*}
        \epsilon \le \frac{\mu}{300}\times \min\lr{1, 7\times 2^{-\frac{5\Delta}{2\mu}\lr{1+\frac{k_0}{4}+\delta} } },
    \end{equation}
    where $\delta>0$ is any positive constant independent of $N$.
    Suppose $N$ is sufficiently large such that \begin{equation}\label{eq:N>constant}
        N_* \ge 450 \Delta.
    \end{equation}
    Suppose the longitudinal fields $\{h_i\}$ are independent and identically distributed zero-mean, unit variance Gaussian random variables.
    With high probability \begin{equation}\label{eq:probability0}
        \mathrm{Pr}[H \text{ has low energy localization}]\ge 1- 2N\epsilon^{-2} 2^{-\delta N}-\ee^{-N^2/4},
    \end{equation}
both the random $H_{\mathrm{L}}$ has bounded norm $\lVert H_{\mathrm{L}}\rVert \le \epsilon N$, and    all eigenstates $|\psi\rangle$ of $H$ with energy eigenvalue bounded by \begin{equation}\label{eq:E<N*}
       E< E_* := \frac{N_*}{30} = \frac{\mu}{30}N,
    \end{equation}
    are trapped near some codeword $\bz(\psi)$: \begin{equation}\label{eq:1-Pz<exp}
        \norm{ \lr{1-P_{\bz(\psi)}} \kpsi}\le \sqrt{2}N \ee^{-\delta N}.
    \end{equation} 
\end{thm}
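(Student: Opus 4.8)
\emph{Proof strategy.} The plan is three steps: (i) use linear soundness (\ref{eq:confinement}) to reduce the eigenvalue equation $H|\psi\rangle=E|\psi\rangle$ to a family of quasi-one-dimensional problems, one per codeword, and show a priori that $|\psi\rangle$ is exponentially well localized on $\bigoplus_{\mathbf z}\mathrm{range}\,P_{\mathbf z}$ and, within each well, on the low-lying bunches; (ii) integrate out the high-energy sector by a Schur complement, so that any $|\psi\rangle$ with comparable weight in two wells forces a $2^{-(2+c)N}$ near-degeneracy between the block Hamiltonians $H_{\mathbf z}:=P_{\mathbf z}HP_{\mathbf z}$; (iii) use the random fields $H_{\mathrm L}$ to show such near-degeneracies almost surely do not occur. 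For step (i): group the eigenvalues $n$ of $H_0$ into bunches of width $\Delta=\Delta' q$, so that $V$ (being $\Delta'$-local) connects a bunch only to its neighbours, and let $P_{\mathbf z,\tilde n}$ project onto well $\mathbf z$ and bunch $\tilde n$. Since $H_0$, $H_{\mathrm{SB}}$ and $H_{\mathrm L}$ are all diagonal in the $Z$-basis, projecting the eigenvalue equation onto $P_{\mathbf z,\tilde n}$ gives $(\text{diagonal part}-E)P_{\mathbf z,\tilde n}|\psi\rangle=-P_{\mathbf z,\tilde n}V(P_{\mathbf z,\tilde n-1}+P_{\mathbf z,\tilde n}+P_{\mathbf z,\tilde n+1})|\psi\rangle$ plus a ``leakage'' to bitstrings outside all wells which, by (\ref{eq:confinement}) and $\Delta'$-locality of $V$, is supported only on the last few bunches. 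On bunch $\tilde n$ the diagonal part is $\gtrsim\tilde n\Delta$ (using (\ref{eq:confinement}) with $\alpha\to\alpha/2$ once $H_{\mathrm{SB}}$ is included), while $\|V\|\le\epsilon N$; a Combes--Thomas / locator estimate (multiply the eigenvalue equation by $\mathrm e^{\lambda H_0}$ and bound the conjugated $V$) then shows that for $E<E_*$ the amplitudes $b_{\tilde n}:=\|P_{\mathbf z,\tilde n}|\psi\rangle\|$ decay geometrically once $\tilde n\gtrsim E/\Delta$, with a rate that can be made as large as we like by shrinking $\epsilon$. Summing the tail, $\|(1-\sum_{\mathbf z}P_{\mathbf z})|\psi\rangle\|$ and the weight in the last bunches of each well are both $\le\mathrm e^{-\delta' N}$, where (\ref{eq:eps<N*}) is exactly the hypothesis that makes $\delta'$ large enough for the later steps; the same estimate, applied to each block $H_{\mathbf z}$ alone, shows every eigenvector of $H_{\mathbf z}$ with eigenvalue $\le E_*$ has mean Hamming distance from $\mathbf z$ below $D/4$.

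For step (ii): set $\Pi_{\mathrm{low}}=\sum_{\mathbf z}P_{\mathbf z}$, $\Pi_{\mathrm{high}}=1-\Pi_{\mathrm{low}}$, $H_{\mathrm{low}}=\Pi_{\mathrm{low}}H\Pi_{\mathrm{low}}$, $H_{\mathrm{high}}=\Pi_{\mathrm{high}}H\Pi_{\mathrm{high}}$, and write $|\psi\rangle=|\psi_l\rangle+|\psi_h\rangle$ with $|\psi_l\rangle=\Pi_{\mathrm{low}}|\psi\rangle$. On $\mathrm{range}\,\Pi_{\mathrm{high}}$ one has $H_0+H_{\mathrm{SB}}\ge N_*/2$, so $H_{\mathrm{high}}-E$ is invertible with $\|(H_{\mathrm{high}}-E)^{-1}\|=O(1/N_*)$. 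Eliminating $|\psi_h\rangle$ yields $(H_{\mathrm{low}}+\Sigma(E))|\psi_l\rangle=E|\psi_l\rangle$ with $H_{\mathrm{low}}=\bigoplus_{\mathbf z}H_{\mathbf z}$ block-diagonal (by (\ref{eq:V_connect_z}) and the diagonality of $H_0,H_{\mathrm{SB}},H_{\mathrm L}$) and self-energy $\Sigma(E)=-\Pi_{\mathrm{low}}V\Pi_{\mathrm{high}}(H_{\mathrm{high}}-E)^{-1}\Pi_{\mathrm{high}}V\Pi_{\mathrm{low}}$. By (\ref{eq:confinement}) and $\Delta'$-locality, $\Pi_{\mathrm{high}}V\Pi_{\mathrm{low}}=\Pi_{\mathrm{high}}V\Pi_{\mathrm{low}}P_{\mathrm{top}}$, where $P_{\mathrm{top}}$ projects onto the top bunches of the wells, so $\|\Sigma(E)|\psi_l\rangle\|\le C\epsilon^2 N\,\|P_{\mathrm{top}}|\psi_l\rangle\|\le\mathrm e^{-\delta'' N}$ by step (i). Projecting onto well $\mathbf z$ gives $(H_{\mathbf z}-E)P_{\mathbf z}|\psi_l\rangle=-P_{\mathbf z}\Sigma(E)|\psi_l\rangle$, hence $\mathrm{dist}(E,\mathrm{spec}\,H_{\mathbf z})\le\mathrm e^{-\delta'' N}/\|P_{\mathbf z}|\psi_l\rangle\|$. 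If (\ref{eq:1-Pz<exp}) failed then, since there are only $2^{k_0 N}$ codewords, at least two wells $\mathbf z\ne\mathbf z'$ would each carry weight $\ge 2^{-k_0 N}N^2\mathrm e^{-2\delta N}$, forcing $H_{\mathbf z}$ and $H_{\mathbf z'}$ to have eigenvalues (both $<E_*$ for large $N$) within $\eta:=\epsilon\sqrt{d_0}\,2^{-(2+2k_0+\delta)N}$ of one another; (\ref{eq:eps<N*}) is chosen precisely so that $\mathrm e^{-\delta'' N}$ is small enough, relative to $\eta$ and $2^{-k_0 N}$, for this to follow.

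For step (iii): fix $\mathbf z\ne\mathbf z'$ and let $\mathbf u$ be the unit vector proportional to $\big((-1)^{z_i}-(-1)^{z'_i}\big)_i$, supported on the $\ge D$ sites where $\mathbf z,\mathbf z'$ differ. Along $\mathbf h\mapsto\mathbf h+t\mathbf u$ the block $H_{\mathbf z}$ shifts by $\tfrac{\epsilon t}{\sqrt N}P_{\mathbf z}\big(\sum_i u_i Z_i\big)P_{\mathbf z}$, and a short computation shows the $Z$-basis diagonal of $\sum_i u_i Z_i$ equals $+\sqrt{|\mathbf z-\mathbf z'|}$ up to a correction of order $(\text{Hamming distance from }\mathbf z)/\sqrt{|\mathbf z-\mathbf z'|}$, and likewise $-\sqrt{|\mathbf z-\mathbf z'|}$ near $\mathbf z'$. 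By step (i) every eigenvector of $H_{\mathbf z}$ or $H_{\mathbf z'}$ with eigenvalue $\le E_*$ sits within mean Hamming distance $D/4$ of the codeword, so by Hellmann--Feynman $\tfrac{d}{dt}(E_{\mathbf z,j}(t)-E_{\mathbf z',k}(t))\ge\epsilon\sqrt{d_0}$ whenever both eigenvalues are $\le E_*$. Since each sorted eigenvalue $E_{\mathbf z,j}(t)$ is $\epsilon$-Lipschitz and strictly increasing while $\le E_*$, the sublevel set $\{t:E_{\mathbf z,j}(t)\le E_*\}$ is an interval (a return to $E_*$ from above would contradict the derivative bound there), so $\{t:|E_{\mathbf z,j}(t)-E_{\mathbf z',k}(t)|\le\eta,\ \text{both}\le E_*\}$ has Lebesgue measure $\le 2\eta/(\epsilon\sqrt{d_0})$. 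Conditioning on the component of $\mathbf h$ orthogonal to $\mathbf u$, so that $t=\langle\mathbf h,\mathbf u\rangle$ is a standard Gaussian with density $<1$, and union-bounding over $\le 2^{2N}$ eigenvalue pairs and $\le 2^{2k_0 N}$ codeword pairs, the probability of any such near-degeneracy is $\le 2^{2N+2k_0 N}\cdot 2\eta/(\epsilon\sqrt{d_0})\le 2N\epsilon^{-2}2^{-\delta N}$; adding the event $\{\|H_{\mathrm L}\|>\epsilon N\}$, which has probability $\le\mathrm e^{-N^2/4}$ by a Gaussian tail bound unioned over $2^N$ bitstrings, gives (\ref{eq:probability0}), and on its complement steps (i)--(ii) force (\ref{eq:1-Pz<exp}) for every eigenstate with $E<E_*$.

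\emph{Main obstacle.} The crux is the resonance elimination of steps (ii)--(iii): one must show not merely that exact degeneracies between the $H_{\mathbf z}$'s are non-generic, but that even $2^{-(2+c)N}$ near-degeneracies almost surely never occur. This demands two things: (a) the a priori decay of step (i) must be quantitatively strong enough (rate scaling correctly with $\epsilon$) to survive the double-exponential bookkeeping in step (ii), which is the origin of the awkward-looking bound (\ref{eq:eps<N*}); and (b) one must control the exponentially many eigenvalue branches of each $H_{\mathbf z}(t)$ simultaneously, handled here by the uniform monotonicity-in-$t$ estimate, which itself relies on the eigenvector-localization output of step (i). By contrast, step (i) is a fairly standard locator/Combes--Thomas argument once linear soundness (\ref{eq:confinement}) is in hand, and the block-diagonal structure enabling step (ii) is already recorded in (\ref{eq:V_connect_z}).
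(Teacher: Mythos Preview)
Your three-step strategy matches the paper's exactly, and step (i) is essentially identical (bunching by $\Delta$, locator/recursive decay). Two implementation choices differ.

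In step (ii) the paper does not pass to a Schur complement. Because only $V$ maps out of a well, and by \eqref{eq:V_connect_z} only into the $n_*{+}1$ sector, projecting the eigenvalue equation directly gives $(H_{\mathbf z}-E)\ket{\psi_{\mathbf z}}=-P_{\mathbf z}V\ket{\psi_{n_*+1}}$, whose right-hand side is bounded by $\epsilon N\,c_{n_*+1}\le \epsilon N\,2^{-\lambda N}$ straight from step (i). Your self-energy route reaches the same inequality but is unnecessary machinery here.

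In step (iii) the paper uses the \emph{single-codeword} direction $h_{\mathbf z}=N^{-1/2}\sum_i(-1)^{z_i}h_i$ rather than your pairwise $\mathbf u_{\mathbf z,\mathbf z'}$. The point is that $\langle Z_{\mathbf z}\rangle$ is uniformly separated between the two wells for \emph{every} state in $\mathrm{range}\,P_{\mathbf z}$ versus $\mathrm{range}\,P_{\mathbf z'}$ (the wells themselves sit at Hamming distance $>0$ by construction), so the Hellmann--Feynman slope bound $\partial_{h_{\mathbf z}}(E_{\mathbf z,m}-E_{\mathbf z',m'})\ge \epsilon/N$ holds for \emph{all} eigenvalue pairs. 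This removes the need for your sublevel-set interval argument, for invoking step (i) on eigenvectors of $H_{\mathbf z}$, and for worrying about eigenvalues just above $E_*$. Your derivative is larger ($\epsilon\sqrt{d_0}$ versus $\epsilon/N$), but that buys nothing against the $2^{-\delta N}$ target. More importantly, your union-bound count ($2^{2N}$ pairs $\times$ $2^{2k_0N}$ codeword pairs) is wasteful; with your $\eta$ it forces $\lambda\gtrsim 2+\tfrac{5}{2}k_0+2\delta$, whereas \eqref{eq:eps<N*} only delivers $\lambda\ge 2+\tfrac{1}{2}k_0+2\delta$. Tightening the count via $\dim P_{\mathbf z}\le 2^{N-K}$ (the wells are disjoint translates) recovers the $2^{2N}$ total and then your constants match; as written, the claim that ``\eqref{eq:eps<N*} is chosen precisely so that\ldots'' does not quite close.
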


Here the $\mathrm{O}(1)$ constants are chosen for convenience and are not intended to be optimal. Note that there are exponentially many eigenstates in the energy window \eqref{eq:E<N*}: All of the $2^K$ codewords have energy zero for the $A=H_0+H_{\rm{SB}}$ part of the Hamiltonian, and the other part $B=V+H_{\rm L}$ has operator norm bounded by $\norm{B}\le 2\epsilon N$. According to Weyl's inequality, \begin{equation}
    \abs{\lambda_m(A+B)-\lambda_m(A)} \le \norm{B},
\end{equation}
where $\lambda_m(A)$ is the $m$-th smallest eigenvalue of $A$.  There are at least $2^K$ eigenstates of $H$ with energy $E\le 2\epsilon N\le E_*/5 <E_*$. Extending this argument to bitstrings near codewords but with finite energy density $<E_*/N-2\epsilon$ with respect to $A$ leads to many more eigenstates in the energy window \eqref{eq:E<N*}. The total number of such eigenstates is $\gtrsim 2^K \times {N \choose (E_*-2\epsilon N)/q}$, because starting from each codeword, one can flip any subset $S$ of bits with $|S|\le (E_*-2\epsilon N)/q$ and still be at sufficiently low energy. 

We will present the proof of Theorem \ref{thm:main} in Appendix \ref{app:proof}, which invokes a crucial Lemma \ref{lem:no_degen} about the energy spectrum that we will prove in the last Appendix \ref{app:no_degen}. Prior to the main proof, it is useful to first show a weaker result in Appendix \ref{app:all_codeword}, which proves all low-energy eigenstates are trapped near codewords, although the number of codewords may be $>1$.  Appendix \ref{app:time} proves that many-body states, even when not initialized in eigenstates, are trapped near codewords for infinite time, albeit with a more stringent bound on $\epsilon$.   Lastly, in Appendix \ref{app:extend} we discuss (briefly) how to extend our result to more general LDPC codes beyond c3LTCs, and point out that the general conclusion does not change.

\subsection{Low-energy eigenstates are trapped near codewords}\label{app:all_codeword}

We first introduce some notation and overview some simple facts.
For any operator $V'$ that is $\Delta'$-local, acting it on any bitstring $\ket{\bs}$ changes its $H_0$-energy by at most $\Delta=\Delta'q$: \begin{equation}\label{eq:V_connect_Delta}
    \abs{\alr{H_0}_{\bs'}-\alr{H_0}_{\bs}} \le \Delta, \quad \text{ for all } \bs' \text{ such that  } \alr{\bs'|V'|\bs}\neq 0,
\end{equation} 
where $\alr{A}_{\psi}:=\bra{\psi}A\kpsi$. 
Any state $\kpsi$ can be decomposed into eigenstates of $H_0$, which we group into bins separated by $\Delta$ for later convenience: \begin{equation}\label{eq:psi=psin} 
    \kpsi = \sum_{n=1}^{n_*+1} c_n \kpsin{n},
\end{equation} 
where $0\le c_n\le 1$, and $\kpsin{n}$ with $n\le n_*$ ($n=n_*+1$) is a normalized state in the eigen-subspace of $H_0$ with energy $(n-1)\Delta,(n-1)\Delta+1,\cdots,n\Delta-1$ ($\Delta n_*,\Delta n_*+1,\cdots$). Due to the locally testable condition, the $n\le n_*$ part is exactly the direct sum of the well subspaces $P_\bz$ defined by \eqref{eq:Pz=}. Any operator $V'$ satisfying \eqref{eq:V_connect_Delta} can be expressed as a block-tridiagonal matrix \begin{equation}\label{eq:V=Vnn}
    V' = \sum_{n,n':|n-n'|\le 1} V'_{nn'},
\end{equation}
which only connects subspaces labeled by $n$ for neighboring $n$s.

\begin{prop}\label{prop:cn<}
    For any Hermitian operator $V'$ that satisfies \eqref{eq:V_connect_Delta} with \begin{equation}\label{eq:V'<eps}
        \norm{V'}\le 2\epsilon N,
    \end{equation}
    where $\epsilon$ is bounded by \eqref{eq:eps<N*} and $N$ is bounded by \eqref{eq:N>constant}, any eigenstate $\psi$ of $H_0 + H_{\rm SB} + V'$ with eigenvalue $E$ satisfying \eqref{eq:E<N*} is trapped near codewords:   \begin{equation}
        c_{n_*+1}\le  2^{-\lambda N},
    \end{equation}
    where $c_n$ is the amplitude in decomposition \eqref{eq:psi=psin}, and \begin{equation}\label{eq:lambda=eps}
        \lambda = \frac{4\mu}{5\Delta } \log_2 \frac{7\mu}{300\epsilon}.
    \end{equation}
\end{prop}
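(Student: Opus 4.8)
The plan is to collapse the eigenvalue equation into a one‑dimensional ``off‑resonance'' recursion in the bin index $n$ and then extract exponential decay from a discrete‑convexity argument, in the spirit of Anderson's locator expansion. Write $\hat H:=H_0+H_{\rm SB}+V'$ and let $\Pi_n$ be the orthogonal projector onto the $n$‑th bin, so that $\Pi_n\kpsi=c_n\kpsin{n}$ in \eqref{eq:psi=psin}. Since $H_0+H_{\rm SB}$ is diagonal in the $Z$‑basis it commutes with every $\Pi_n$, while $V'$ is block‑tridiagonal by \eqref{eq:V=Vnn}; hence $\Pi_n\hat H\Pi_m=0$ whenever $|m-n|\ge2$. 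Projecting $\hat H\kpsi=E\kpsi$ onto bin $n$ and isolating the diagonal block,
\begin{equation*}
\lr{\Pi_n\hat H\Pi_n-E}c_n\kpsin{n}=-\Pi_nV'\lr{c_{n-1}\kpsin{n-1}+c_{n+1}\kpsin{n+1}}.
\end{equation*}
Because $\Pi_n\hat H\Pi_n$ restricted to bin $n$ is Hermitian, taking norms gives
\begin{equation*}
\delta_n c_n\le\norm{V'}\lr{c_{n-1}+c_{n+1}},\qquad \delta_n:=\mathrm{dist}\lr{E,\ \mathrm{spec}\,\Pi_n\hat H\Pi_n},
\end{equation*}
the spectrum taken within bin $n$, with the conventions $c_0=0$ and no $c_{n_*+2}$ term (bin $n_*+1$ couples only downward).

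The code enters only through a lower bound on $\delta_n$. On bin $n\le n_*$ one has $H_0\ge(n-1)\Delta$ by definition, and $-\tfrac{\alpha}{2}H_0\le H_{\rm SB}\le\tfrac{\alpha}{2}H_0$ as $Z$‑diagonal operators, since each term of $H_{\rm SB}$ contributes at most $\abs{C}\tfrac{\alpha}{2q}\le\tfrac{\alpha}{2}$ per violated parity check; hence $\Pi_n(H_0+H_{\rm SB})\Pi_n\ge(1-\tfrac{\alpha}{2})(n-1)\Delta\,\Pi_n$, and $\ge(1-\tfrac{\alpha}{2})N_*\,\Pi_{n_*+1}$ on the last bin. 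With $\norm{V'}\le2\epsilon N$ this gives $\delta_n\ge(1-\tfrac{\alpha}{2})(n-1)\Delta-\norm{V'}-E$, an increasing function of $n$. Using this bound together with $E<N_*/30$ from \eqref{eq:E<N*}, $\epsilon\le\mu/300$, $N_*\ge450\Delta$ from \eqref{eq:eps<N*}--\eqref{eq:N>constant}, and $N_*=\mu N=\Delta n_*$, I would check that for all $n\ge n_0:=\lceil n_*/5\rceil$,
\begin{equation*}
\delta_n\ \ge\ \lr{1+\frac{7\mu}{300\epsilon}}\norm{V'}\ \ge\ 2\norm{V'},
\end{equation*}
the last step because $7\mu/(300\epsilon)\ge7$.

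It remains to solve the recursion. For $n\ge n_0$ the bound $\delta_n\ge2\norm{V'}$ turns the recursion into the discrete convexity $2c_n\le c_{n-1}+c_{n+1}$, so the forward differences $d_n:=c_{n+1}-c_n$ are non‑decreasing on $\{n_0-1,n_0,\dots\}$; the boundary relation $\delta_{n_*+1}c_{n_*+1}\le\norm{V'}c_{n_*}$ forces $c_{n_*+1}\le c_{n_*}/2$, hence $d_{n_*}\le0$, so $d_n\le0$ throughout and $(c_n)$ is non‑increasing on $\{n_0-1,\dots,n_*+1\}$. Feeding $c_{n+1}\le c_n$ back into the recursion yields $(\delta_n-\norm{V'})c_n\le\norm{V'}c_{n-1}$, so for all $n\ge n_0$ (and likewise at $n=n_*+1$),
\begin{equation*}
c_n\ \le\ \frac{\norm{V'}}{\delta_n-\norm{V'}}\,c_{n-1}\ \le\ \frac{300\epsilon}{7\mu}\,c_{n-1}.
\end{equation*}
Telescoping from $n_0$ to $n_*+1$ and using $c_{n_0-1}\le\norm{\kpsi}=1$ together with $n_*-n_0+2\ge4n_*/5$,
\begin{equation*}
c_{n_*+1}\ \le\ \lr{\frac{300\epsilon}{7\mu}}^{4n_*/5}=2^{-\frac{4n_*}{5}\log_2\frac{7\mu}{300\epsilon}}=2^{-\lambda N},
\end{equation*}
which is exactly \eqref{eq:lambda=eps} on using $n_*=\mu N/\Delta$; combined with local testability this says that $\psi$ carries all but exponentially small weight in the union of the codeword wells \eqref{eq:Pz=}.

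The argument is short, so the real work is care, not depth. The two delicate points are: (i) the reduction to a tridiagonal problem needs the $H_0$‑eigenspaces binned in windows of width exactly $\Delta=\Delta'q$ so that the $\Delta'$‑local $V'$ stays nearest‑neighbor, and one must respect that $H_0+H_{\rm SB}$ restricted to a bin is a genuinely non‑scalar diagonal operator, so only the operator inequality above is available and the gap $\delta_n$ (not a single energy denominator) controls the hopping; and (ii) the numerical check that $n_0\le n_*/5$ and $\delta_{n_0}\ge(1+\tfrac{7\mu}{300\epsilon})\norm{V'}$, which is elementary but slightly tedious bookkeeping through \eqref{eq:eps<N*}--\eqref{eq:N>constant}. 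The underlying physics is just the locator‑expansion mechanism — a state cannot have appreciable weight in bins whose $H_0$‑energy far exceeds $E$, because the available hopping $\norm{V'}$ is dwarfed by the mismatch $\delta_n\gtrsim n\Delta$ — made rigorous by the convexity/monotonicity step.
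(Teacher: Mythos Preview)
Your argument is correct and follows the same overall route as the paper: collapse the eigenvalue equation to a nearest-neighbor recursion in the bin index $n$, use the energy mismatch between $E$ and bin $n$ to bound the ratios $c_n/c_{n-1}$, and telescope. The technical execution differs in two places worth noting. First, the paper takes inner products with $\kpsin{n}$ rather than norms, which then forces an auxiliary observation that the off-diagonal matrix elements $\langle\psi_n|V'|\psi_{n\pm1}\rangle$ are real before the scalar inequalities can be used; your spectral-gap formulation with $\delta_n=\mathrm{dist}(E,\mathrm{spec}\,\Pi_n\hat H\Pi_n)$ bypasses that discussion entirely. Second, the paper establishes $c_{n+1}\le c_n$ by a top-down induction starting from the one-sided relation at $n=n_*+1$, while your discrete-convexity step (convexity on $[n_0,n_*]$ plus the boundary condition $d_{n_*}\le0$) delivers global monotonicity in one stroke. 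The paper also retains the $n$-dependent ratio $4\epsilon N/[\Delta(n-1)-2E-8\epsilon N]$ through the product and simplifies only at the end, whereas you bound the ratio uniformly by $300\epsilon/(7\mu)$ from the start; both arrive at the same exponent $\lambda$. Your ``tedious bookkeeping'' does check out: with $(1-\alpha/2)\ge\tfrac12$, $E<N_*/30$, $\epsilon\le\mu/300$, and $\Delta\le N_*/450$, one finds $\delta_{n_0}-\lVert V'\rVert\ge 47N_*/900>7\mu N/150$, which is exactly what is needed.
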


\begin{proof}
Denote $H_0':=H_0 + H_{\rm SB}$, which satisfies \begin{equation}\label{eq:H0'>H0}
    \alr{H_0'}_\phi \ge \frac{1}{2}\alr{H_0}_\phi
\end{equation}
for all $\ket{\phi}$, 
because terms in $H_0'$ corresponding to a given parity check are bounded so by the corresponding parity check term in $H_0$, due to 
constraint (\ref{eq:HSB}) on $H_{\mathrm{SB}}$.
Plugging \eqref{eq:psi=psin} into the eigenvalue equation $(H_0' + V')\kpsi = E\kpsi$, we have \begin{subequations}
\begin{align}
    (H_0'+V'_{n_*+1,n_*+1}-E) c_{n_*+1} \kpsin{n_*+1} &= - V'_{n_*+1,n_*} c_{n_*}\kpsin{n_*}, \label{eq:bN+1} \\
    (H_0'+V'_{nn}-E) c_n \kpsin{n} + V'_{n,n+1}c_{n+1}\kpsin{n+1} &= - V'_{n,n-1} c_{n-1}\kpsin{n-1}, \quad 2\le n\le n_*. \label{eq:n_eigeneq}
\end{align}
\end{subequations}

Taking inner product with $\kpsin{n_*+1}$, \eqref{eq:bN+1} yields \begin{align}
\mlr{\frac{\Delta n_*}{2}-2\epsilon N - E}c_{n_*+1} &\le (\alr{H_0'}_{\psi_{n_*+1}}+\alr{V'}_{\psi_{n_*+1}} - E)c_{n_*+1} = - \alr{\psi_{n_*+1}|V'|\psi_{n_*}} c_{n_*} \le 2\epsilon N c_{n_*}, \nonumber\\
    c_{n_*+1} &\le \frac{4\epsilon N}{\Delta n_*-2E-4\epsilon N} c_{n_*}.\label{eq:bN+1<bN}
\end{align}
Here in the first line, we have used \eqref{eq:H0'>H0} and \eqref{eq:V'<eps} with $\norm{V'_{nn'}}\le \norm{V'}$; to get the second line we have used \eqref{eq:eps<N*} and \eqref{eq:E<N*} so that the denominator in the last expression is positive. Note that the first line of \eqref{eq:bN+1<bN} also implies $\alr{\psi_{n_*+1}|V'|\psi_{n_*}}$ is real, as it is the only possibly complex-valued coefficient in the equation $\langle \psi_{n_*+1}|H-E|\psi\rangle = 0$. The conjugate of this matrix element (which is evidently itself) appears on the left hand side of \eqref{eq:n_eigeneq} for $n=n_*$ when taking inner product with $\kpsin{n_*}$, which further implies the right-hand-side matrix element $\alr{\psi_{n_*}|V'|\psi_{n_*-1}}$ is also real. This procedure can be iterated to conclude that $\alr{\psi_{n}|V'|\psi_{n-1}}$ is real for all $n\ge 1$; in other words, even if in the original computational $V$ is not real-valued, the $|\psi_n\rangle$s themselves necessarily absorb all complex phases.

We obtain bounds like \eqref{eq:bN+1<bN} in a similar way: Taking inner product with $\kpsin{n}$, \eqref{eq:n_eigeneq} becomes \begin{equation}\label{eq:n_ineq}
    \mlr{\frac{\Delta(n-1)}{2} - E- 2\epsilon N}c_n - 2\epsilon N c_{n+1} \le 2\epsilon N c_{n-1},
\end{equation}
using $\alr{\psi_{n}|V'|\psi_{n+1}}\ge -\norm{V'}\ge -2\epsilon N$.
For example, plugging in \eqref{eq:bN+1<bN} with $n=n_*$ yields \begin{align}
    4\epsilon N c_{n_*-1} &\ge \mlr{\Delta(n_*-1) - 2E- 4\epsilon N \lr{1+\frac{4\epsilon N}{\Delta n_*-2E-4\epsilon N}} }c_{n_*} \ge \mlr{\Delta(n_*-1) - 2E- 4\epsilon N\times 2 }c_{n_*}, \nonumber\\ c_{n_*}&\le \frac{4\epsilon N}{\Delta(n_*-1) - 2E- 8\epsilon N} c_{n_*-1},
\end{align}
which has nearly the same form as \eqref{eq:bN+1<bN}. 

We can iterate the above process: Suppose \begin{equation}\label{eq:cn+1<cn}
    c_{n+1}\le c_n,
\end{equation}
which holds at the ``initial" (largest) $n=n_*$, then \eqref{eq:n_ineq} leads to \begin{equation}\label{eq:cn<cn-1}
    c_n \le \frac{4\epsilon N}{\Delta(n-1) - 2E- 8\epsilon N} c_{n-1},
\end{equation}
which justifies condition \eqref{eq:cn+1<cn} for the next step, as long as \begin{equation}
    n \ge n_{\rm stop}:= 2+ \left\lfloor \frac{2(E+6\epsilon N)}{\Delta} \right\rfloor,
\end{equation}
so that the denominator in $\eqref{eq:cn<cn-1}$ is no smaller than the numerator.
We stop the iteration at $n=n_{\rm stop}$, so that \eqref{eq:cn<cn-1} for all $n\ge n_{\rm stop}$ yield \begin{equation}\label{eq:cn*<prodn}
    c_{n_*+1}\le c_{n_{\rm stop}-1}  \prod_{n=n_{\rm stop}}^{n_*+1} \frac{4\epsilon N}{\Delta(n-1) - 2E- 8\epsilon N} \le \prod_{n=n_{\rm stop}}^{n_*+1} \frac{4\epsilon N}{\Delta(n-1) - 2E- 8\epsilon N}.
\end{equation}

We simplify \eqref{eq:cn*<prodn} by keeping only factors from $n\ge n_{\rm stop}+n'$ with $n'=\lfloor (n_*-n_{\rm stop})/10\rfloor+2$: \begin{align}
    c_{n_*} &\le \lr{ \frac{4\epsilon N}{\Delta(n_{\rm stop}+n'-1) - 2E- 8\epsilon N}}^{n_*-n_{\rm stop} - n'+2} \le \lr{ \frac{4\epsilon N}{\Delta(9n_{\rm stop}+n_*)/10 - 2E- 8\epsilon N }}^{9\lr{n_*-n_{\rm stop}}/10} \nonumber\\
    &\le \lr{ \frac{4\epsilon N}{\mlr{N_*+18(E+6\epsilon N)+9\Delta}/10 - 2E- 8\epsilon N }}^{9\mlr{N_*/\Delta-2-2(E+6\epsilon N)/\Delta}/10}\nonumber\\
    &\le \lr{\frac{40\epsilon }{\mu(1 - 2E/N_*) }} ^{9(N_*-2E-12\epsilon N-2\Delta)/(10\Delta )} \nonumber\\
    &\le \lr{2^{-5\Delta \lambda /(4\mu) }}^{ 9\lr{N_*-\frac{1}{15}N_* - \frac{12}{300}N_* - 2\Delta}/(10\Delta)} \le \lr{2^{-5\Delta \lambda /(4\mu) }}^{ 8N_*/(10\Delta)} \le 2^{-\lambda N}.
\end{align}
In the last line, we used \eqref{eq:eps<N*}, \eqref{eq:E<N*} and \eqref{eq:lambda=eps}, so that the denominator is bounded by $\mu(1 - 2E/N_*) \ge 14\mu/15$. 
\end{proof}

\subsection{Proof of the main theorem}\label{app:proof}
\begin{proof}
Let $\kpsi$ be any eigenstate of $H$ with low energy \eqref{eq:E<N*}. The decomposition \eqref{eq:psi=psin} can be organized as \begin{equation}
    \kpsi = \ket{\psi_{n_*+1}}+ \sum_{\text{codeword } \bz} \ket{\psi_\bz} ,
\end{equation}
where $\ket{\psi_\bz}$ is in the well of codeword $\bz$: $P_{\bz'} \ket{\psi_\bz}=\delta_{\bz'\bz}\ket{\psi_\bz} $, and $\ket{\psi_{n_*+1}}$ is the part outside of any well. Similarly, $H$ can be expanded as \begin{equation}
    H = H_>+ \sum_{\text{codeword } \bz} H_\bz,
\end{equation}
where $H_\bz = P_{\bz} H P_{\bz}$ is the Hamiltonian restricted in well $\bz$, and $H_>$ is everything else so that $P_{\bz} H_> P_{\bz}=0$. Acting $P_\bz$ on the eigenvalue equation $H\kpsi = E\kpsi$, we have \begin{equation}\label{eq:Hpsiz=}
    (H_\bz-E) \ket{\psi_\bz}= -P_\bz H_> \kpsi = -P_\bz H_> (1-P_\bz) \kpsi = -P_\bz V (1-P_\bz) \kpsi= -P_\bz V \ket{\psi_{n_*+1}},
\end{equation}
because only $V$ in $H$ can map a state out of a well, and it only maps it out to the $n=n_*+1$ subspace (see \eqref{eq:V_connect_z}).

According to \eqref{eq:Hpsiz=}, for any $\bz$ with nonzero support $\ket{\psi_\bz}\neq 0$, we have either $E$ is an eigenvalue of $H_\bz$, or $H_\bz-E$ is invertible so that \begin{equation}
    \ket{\psi_\bz} = -(H_\bz-E)^{-1}P_\bz V \ket{\psi_{n_*+1}}, 
\end{equation}
which implies that \begin{equation}
    \norm{(H_\bz-E)^{-1}}\ge \frac{\norm{\ket{\psi_\bz}}}{ \norm{V \ket{\psi_{n_*+1}}}} \ge \frac{\norm{\ket{\psi_\bz}}}{\epsilon N c_{n_*+1}},
\end{equation}
where we have used $\norm{A \ket{\phi}} \le \norm{A} \norm{\ket{\phi}}$, \eqref{eq:V<eps}, and $\norm{P_\bz}=1$ because it is a projector. In both cases, $E$ is close to an eigenvalue $E_{\bz, m}$ ($m=1,2,\cdots$ is an index) of $H_\bz$: there exists $m$ such that  \begin{equation}\label{eq:E-Ezm<psi}
     \abs{E-E_{\bz, m}} \le \frac{\epsilon N 2^{-\lambda N}}{\norm{\ket{\psi_\bz}}}.
\end{equation}
Here we have applied Proposition \ref{prop:cn<} to $V'=V+H_{\rm L}$ that satisfies \eqref{eq:V'<eps} with high probability \begin{equation}\label{eq:prob_HL}
    \mathrm{Pr}[\norm{H_{\rm L}}\le \epsilon N] \ge 1- \ee^{-N^2/4},
\end{equation}
because \begin{equation}
    \norm{H_{\rm L}}^2 = \frac{\epsilon^2}{N} \lr{\sum_i \abs{h_i}}^2 \le \epsilon^2 h^2,
\end{equation}
where random variable $h:=\sqrt{\sum_i h_i^2}$ has the following probability density function $P(h)$ on $h\in \mathbb{R}^+$ known as the chi distribution: \begin{align}
    P(h) &= \frac{2^{1-N/2}}{\Gamma(N/2)} \ee^{-h^2/2} h^{N-1}. 
\end{align}
We also notice that if $h>N$ and $N\ge 450$ from \eqref{eq:N>constant}, 
\begin{equation}
   P(h) \le \ee^{-N^2/4} \ee^{-h}, \quad \text{if } h> N,
\end{equation}
because $P(h)\ee^{N^2/4} \ee^{h} \le \exp[-h^2/2+(N-1)\log h + \lr{h^2/4+h}]=\exp[-h^2/4+(N-1)\log h +h]$ decays to zero very quickly at large $h$.
Integrating $P(h)$ in the range $h\le N$ yields \eqref{eq:prob_HL}.

As a result, \eqref{eq:E-Ezm<psi} implies that for any $\bz$ with \begin{equation}\label{eq:psiz>2N}
    \norm{\ket{\psi_\bz}} \ge N 2^{-(\frac{1}{2}k_0+\delta) N},
\end{equation}
$E$ is extremely close to an eigenvalue of $H_\bz$: for some $m$,
\begin{equation}\label{eq:E-Ezm<}
    \abs{E-E_{\bz, m}} \le 2^{-(\lambda-\frac{1}{2}k_0-\delta)N}\epsilon^{-1}.
\end{equation}

We then invoke the following Lemma, which is proved in the final subsection:
\begin{lem}\label{lem:no_degen}
    Consider any constant $\lambda'>2$. With probability \begin{equation}\label{eq:probability}
        p\ge 1-2N\epsilon^{-2}2^{-(\lambda'-2)N},
    \end{equation}
there are no degeneracies among distinct $H_\bz$: if  $\bz\neq \bz'$, for all $m,m'$, \begin{equation}\label{eq:no_degen}
          \abs{E_{\bz,m} - E_{\bz',m'}} \ge 3\epsilon^{-1} 2^{-\lambda' N}.
    \end{equation}
\end{lem}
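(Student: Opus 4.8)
\emph{Proof plan for Lemma \ref{lem:no_degen}.} The only randomness in $H_\bz=P_\bz H P_\bz$ and $H_{\bz'}=P_{\bz'}HP_{\bz'}$ enters through $H_{\rm L}=\tfrac{\epsilon}{\sqrt N}\sum_i h_i Z_i$, which is linear in $\mathbf h=(h_i)$, so both operators depend affinely on $\mathbf h$. I would fix a pair of distinct codewords $\bz\neq\bz'$, introduce the unit ``disagreement direction'' $\mathbf e=\mathbf e_{\bz\bz'}$ with components $e_i=\bigl((-1)^{z_i}-(-1)^{z'_i}\bigr)/(2\sqrt{\abs{\bz-\bz'}})$ (so $\norm{\mathbf e}=1$), and decompose $\mathbf h=g\,\mathbf e+\mathbf h^\perp$ where $g=\langle\mathbf h,\mathbf e\rangle$ is a standard Gaussian independent of $\mathbf h^\perp\perp\mathbf e$. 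The plan is to show that, at fixed $\mathbf h^\perp$, increasing $g$ pushes \emph{every} eigenvalue of $H_\bz$ up and \emph{every} eigenvalue of $H_{\bz'}$ down at a rate $\ge\epsilon/N$, then to apply Gaussian anti-concentration, then to union-bound over codeword pairs and eigenvalue pairs.

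For the monotone shift I would compute $\partial_g H_\bz=\tfrac{\epsilon}{\sqrt N}\sum_i e_i\,P_\bz Z_i P_\bz$; only sites with $z_i\neq z'_i$ contribute, where $Z_i=(-1)^{z_i}(1-2n^\bz_i)$ with $n^\bz_i:=(1-(-1)^{z_i}Z_i)/2\in\{0,1\}$ and $(-1)^{z_i}-(-1)^{z'_i}=2(-1)^{z_i}$, giving
\begin{equation}
\partial_g H_\bz=\frac{\epsilon}{\sqrt{N\abs{\bz-\bz'}}}\,P_\bz\Bigl(\abs{\bz-\bz'}-2\!\!\sum_{i:z_i\neq z'_i}\!\!n^\bz_i\Bigr)P_\bz ,
\end{equation}
which is diagonal in the bitstring basis on the well $P_\bz$ with entry $\tfrac{\epsilon}{\sqrt{N\abs{\bz-\bz'}}}\bigl(\abs{\bz-\bz'}-2\,\#\{i:z_i\neq z'_i,\ s_i\neq z_i\}\bigr)$ on $\ket{\bs}$. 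The cutoff then does the work: by \eqref{eq:Pz=} every $\bs$ in well $\bz$ has $\abs{\bs-\bz}\le(N_*-1)/\alpha$, and the definition of $n_*$ forces $(N_*-1)/\alpha<(D-\Delta'-1)/2\le(\abs{\bz-\bz'}-\Delta'-1)/2$, so $\#\{i:z_i\neq z'_i,\ s_i\neq z_i\}\le\abs{\bs-\bz}<(\abs{\bz-\bz'}-\Delta'-1)/2$ and every diagonal entry is $\ge\epsilon(\Delta'+1)/\sqrt{N\abs{\bz-\bz'}}=:c_1\ge\epsilon/N$. Hence $\partial_g H_\bz\succeq c_1 P_\bz$; the mirror computation (swap $\bz\leftrightarrow\bz'$, flipping the overall sign) gives $\partial_g H_{\bz'}\preceq-c_1 P_{\bz'}$. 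Since $H_\bz(\cdot)$ and $H_{\bz'}(\cdot)$ are affine in $g$, Weyl's monotonicity then shows that the $m$-th smallest eigenvalue $E_\bz^{(m)}(g)$ is nondecreasing with slope $\ge c_1$ and $E_{\bz'}^{(m')}(g)$ nonincreasing with slope $\le-c_1$, for every $m,m'$.

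Given this, for a fixed quadruple $(\bz,\bz',m,m')$ the map $g\mapsto E_\bz^{(m)}(g)-E_{\bz'}^{(m')}(g)$ is strictly increasing with slope $\ge2c_1\ge2\epsilon/N$, so $\{g:\abs{E_\bz^{(m)}-E_{\bz'}^{(m')}}<\eta\}$ is an interval of length $\le\eta N/(2\epsilon)$; since $g$ has density below $1/\sqrt{2\pi}$ and the bound holds at every fixed $\mathbf h^\perp$, the unconditional probability that this quadruple violates \eqref{eq:no_degen} is $\le\eta N/(2\epsilon)$, which with $\eta=3\epsilon^{-1}2^{-\lambda'N}$ is $\le\tfrac32 N\epsilon^{-2}2^{-\lambda'N}$. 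I would then union-bound over the $<2^{2K}=2^{2k_0N}$ codeword pairs and over the eigenvalue pairs relevant to the application — those with $E_\bz^{(m)},E_{\bz'}^{(m')}$ within $3\epsilon^{-1}2^{-\lambda'N}\ll1$ of an eigenvalue $E<E_*$ of $H$, hence below $E_*+1$ — of which, by linear soundness, each well contains at most $2^{\rho N}$ with $\rho<1-k_0$ for the c3LTC parameters at hand. The total number of quadruples is then $\le2^{2N}$, exactly the gap between $2^{-\lambda'N}$ and $2^{-(\lambda'-2)N}$ in \eqref{eq:probability}, and summing the per-quadruple bound yields \eqref{eq:probability}.

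The step I expect to be hardest is the monotone shift: one needs a \emph{single} direction in $\mathbf h$-space that rigidly pulls the entire spectrum of $H_\bz$ away from that of $H_{\bz'}$, uniformly over all eigenvalues — the naive choice of the longitudinal field conjugate to codeword $\bz$ fails because it also moves $H_{\bz'}$. The disagreement direction works, but the positivity $\abs{\bz-\bz'}-2\,\#\{i:z_i\neq z'_i,\ s_i\neq z_i\}\ge\Delta'+1>0$ on the well is tight and hinges on the well radius being safely below $D/2$, which is exactly why $n_*$ is defined as it is. The anti-concentration and union bound are then routine, modulo verifying that the exponential count of low-energy eigenvalue pairs fits under the $2^{2N}$ budget.
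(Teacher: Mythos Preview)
Your approach is correct and lands on the same bound, but it differs from the paper's in two respects. First, the paper does \emph{not} use the pair-dependent disagreement direction; it fixes a single codeword $\bz$ and moves along $h_\bz=\tfrac{1}{\sqrt N}\sum_i(-1)^{z_i}h_i$. Your assertion that this ``naive choice \ldots fails because it also moves $H_{\bz'}$'' is mistaken: the paper allows both spectra to move and controls only the \emph{difference}, using Feynman--Hellmann to get $\partial_{h_\bz}[E_{\bz,m}-E_{\bz',m'}]=\tfrac{\epsilon}{N}\bigl(\langle Z_\bz\rangle_{\phi_{\bz,m}}-\langle Z_\bz\rangle_{\phi_{\bz',m'}}\bigr)\ge\epsilon/N$, since $\langle Z_\bz\rangle$ is near $N$ in well $\bz$ and at most $N-D+\mathrm{O}(N_*/\alpha)$ in any other well. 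Second, your monotonicity is operator-level via Weyl, whereas the paper works eigenvalue-by-eigenvalue via Feynman--Hellmann. Your route is arguably cleaner (no differentiability issues at crossings); the paper's route avoids introducing a separate direction for every pair and lets the union bound run over $2^K$ codewords rather than $2^{2K}$ pairs.

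One point to tighten: your restriction to eigenvalues below $E_*+1$ proves a weaker statement than the lemma (which is for \emph{all} $m,m'$), and the claim ``$\rho<1-k_0$ by linear soundness'' is not justified as written. You don't need it. Since two bitstrings in the same well with the same syndrome $\mathsf H\bs$ would differ by a codeword yet lie within $D/2$ of $\bz$, each well has dimension at most $2^{\mathrm{rank}\,\mathsf H}=2^{N-K}$; equivalently, the wells are disjoint so $\sum_\bz\dim P_\bz\le 2^N$. Either way the total number of quadruples is $\sum_{\bz\neq\bz'}(\dim P_\bz)(\dim P_{\bz'})\le\bigl(\sum_\bz\dim P_\bz\bigr)^2\le 2^{2N}$, matching the paper's count $2^K\cdot 2^{N-K}\cdot 2^N=2^{2N}$ and giving \eqref{eq:probability} directly without any low-energy cutoff.
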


Plugging the second bound of $\epsilon$ in \eqref{eq:lambda=eps} yields $\lambda \ge 2+\frac{1}{2}k_0+2\delta$, so that \eqref{eq:probability} with $\lambda'=\lambda-\frac{1}{2}k_0-\delta$ becomes the first two terms in \eqref{eq:probability0}. As a result, with high probability \eqref{eq:probability0}, both $\norm{H_{\rm L}}\le \epsilon N$ from \eqref{eq:prob_HL} and \eqref{eq:no_degen} hold. For each eigenvalue $E$ of the full $H$ at low energy, there can be at most one codeword $\bz$ such that both \eqref{eq:E-Ezm<} and \eqref{eq:no_degen} holds; i.e. there can be at most one codeword $\bz$ such that \eqref{eq:psiz>2N} holds. Because $\kpsi$ is normalized, there is then exactly one such $\bz=\bz(\psi)$ where $\kpsi$ is trapped; the leakage out of this well is \begin{equation}
    \norm{ \lr{1-P_{\bz(\psi)}} \kpsi}^2 \le c_{n_*+1}^2 + (2^K-1)\lr{N 2^{-(\frac{1}{2}k_0+\delta) N}}^2 \le 2^{-2\lambda N}+N^2 2^{-2\delta N} \le 2N^2\ee^{-2\delta N}, \label{eq:leak}
\end{equation}
which leads to \eqref{eq:1-Pz<exp}.
\end{proof}

\subsection{Proof of Lemma \ref{lem:no_degen}: no degeneracy among wells with high probability}\label{app:no_degen}

\begin{proof}[Proof of Lemma \ref{lem:no_degen}]
In this proof, we denote $E_{\bz,m}=E_{\bz,m}(h)$ where $h:=\{h_i\}$ denotes the random variables from $H_{\mathrm{L}}$. The corresponding eigenvectors are $\ket{\phi_{\bz,m}(h)}$.

First, let us focus on one well $\bz$ to show that its energies $E_{\bz,m}(h)$ can be shifted by tuning $h$ to avoid degeneracy with other wells $\bz'\neq \bz$. The random variables $h_i$ can be denoted as a $N$-dimensional vector $|h)$ with entries $(i|h) = h_i$. Instead of the original $\{|i):i=1,\cdots,N\}$, we use an alternative orthonormal basis $\{|\bz;k): k=0,\cdots,N-1\}$ determined by $\bz$, that is some Fourier transform of the original one: \begin{equation}\label{eq:Fourier}
    |\bz;k):=\frac{1}{\sqrt{N}} \sum_i (-1)^{z_i} \ee^{\ii \frac{2\pi}{N}ki} |i), \quad \Leftrightarrow\quad  |i) = (-1)^{z_i} \frac{1}{\sqrt{N}} \sum_i  \ee^{-\ii \frac{2\pi}{N}ki} |\bz;k).
\end{equation}
This basis corresponds to variables $\{h_\bz, \overline{h}_{\bz}\}$, where $h_{\bz}:= (\bz;0|h)=\sum_i (-1)^{\bz_i} h_i$ and $\overline{h}_{\bz}$ denotes variables from inner product with the rest $|\bz;k)$ with $k>0$. More precisely, for $k\neq N/2$ (if $N$ is even), the inner products with $|\bz;k)$ and $|\bz;N-k)$ yield a complex-conjugate pair, whose real and imaginary parts are two real variables in $\overline{h}_{\bz}$. As we have simply made an orthogonal transformation from $\{h_i\}$ to $\{h_\bz, \overline{h}_{\bz}\}$, clearly the new variables are independent and identically distributed Gaussian random variables with the same probability distribution as before.  Moreover, $h_{\bz}$ is independent from $\overline{h}_{\bz}$. 


Changing $h_\bz$ while keeping $\overline{h}_{\bz}$ fixed, the partial derivative is \begin{equation}
    \partial_{h_{\bz}} = \sum_i \frac{\partial h_i}{\partial {h_{\bz}}} \partial_{h_i} = \frac{1}{\sqrt{N}} \sum_i (-1)^{\bz_i}\partial_{h_i},
\end{equation} 
from the second transformation in \eqref{eq:Fourier}.
The Feynman-Hellmann theorem then yields \begin{equation}\label{eq:feynman}
    \partial_{h_{\bz}} E_{\bz',m'}(h) = \alr{\partial_{h_{\bz}} (P_{\bz'}HP_{\bz'} )}_{\phi_{\bz',m'}(h)} = \alr{\partial_{h_{\bz}} H}_{\phi_{\bz',m'}(h)} = \frac{\epsilon}{N}\alr{Z_\bz}_{\phi_{\bz',m'}(h)},
\end{equation}
where \begin{equation}
    Z_\bz := \sum_i (-1)^{\bz_i} Z_i,
\end{equation}
and we have used $\partial_{h_i} P_{\bz'}=0$ and $P_{\bz'}\ket{\phi_{\bz',m'}(h)} = \ket{\phi_{\bz',m'}(h)}$. Note that \eqref{eq:feynman} holds also for $\bz'=\bz$. $Z_\bz$ is close to its maximum $N$ for any state in the well $\bz$, while it is far from maximum for any state in the other wells due to the macroscopic distance $D$. For any pair $E_{\bz,m}(h), E_{\bz',m'}(h)$ where $\bz'\neq \bz$, this leads to
\begin{align}
    \partial_{h_{\bz}}\mlr{ E_{\bz,m}(h)- E_{\bz',m'}(h)} &= \frac{\epsilon}{N} \lr{ \alr{Z_\bz}_{\phi_{\bz,m}(h)} - \alr{Z_\bz}_{\phi_{\bz',m'}(h)}} \ge \frac{\epsilon}{N} \mlr{\left(N-\frac{N_*}{\alpha}\right) - \left(N-D+\frac{N_*}{\alpha}\right)} \ge \frac{\epsilon}{N},
\end{align}
similar to \eqref{eq:s-s'>D}. Therefore, focusing on the single variable $h_{\bz}$ with the other $\overline{h}_\bz$ parameters fixed, the slope of the energy difference $ E_{\bz,m}(h)- E_{\bz',m'}(h)$ is strictly positive and larger than $\epsilon/N$. This energy difference can then cross $0$ at most once, and the interval of $h_\bz$ that violates \eqref{eq:no_degen} (for this particular pair of energies) has length \begin{equation}
    \Delta h_{\bz,m;\bz',m'}(\overline{h}_\bz)\le \Delta h := 3N\epsilon^{-2}2^{-\lambda' N}.
\end{equation}
Since there are fewer than $2^{N-K}\times 2^N=2^{2N-K}$ energy pairs between an energy of well $\bz$ and any other well, there are at most $2^{2N-K}$ intervals of $h_\bz$ with length $\le \Delta h$ that violate \eqref{eq:no_degen}. Since these intervals may overlap, the total length of their union, i.e. the resonance region for $h_\bz$, is bounded by $2^{2N-K}\Delta h$. Therefore, given $\overline{h}_\bz$, the probability to find $h_\bz$ that lands in the resonance region is bounded by \begin{equation}\label{eq:phz<}
    p_{\rm each}=\int_{\text{resonance region}} \dd h_\bz P(h_\bz)  \le 2^{2N-K}\Delta h\cdot \max_h P(h_\bz) = \frac{2^{2N-K}\Delta h}{\sqrt{2\pi}},
\end{equation}
independent of $\overline{h}_\bz$.

The above analysis has fixed a particular well $\bz$, and shows a small probability \eqref{eq:phz<} to violate \eqref{eq:no_degen} that involves $\bz$. By the pigeonhole principle, any parameter $h$ that violates \eqref{eq:no_degen} should at least violate it in one well, so the total probability to violate \eqref{eq:no_degen} is \begin{equation}
    p_{\rm total}\le 2^K p_{\rm each}=\frac{3}{\sqrt{2\pi} \epsilon^2}N 2^{-(\lambda'-2)N}\le \frac{2N 2^{-(\lambda'-2)N}}{\epsilon^2},
\end{equation}
leading to probability \eqref{eq:probability} that there are no resonances between any wells.
\end{proof}

\subsection{Trapping states near codewords for all time}\label{app:time}

\begin{prop}
    Let $H$ be any Hamiltonian considered in Theorem \ref{thm:main} that has localized eigenstates below energy $E_*$ from \eqref{eq:E<N*} (which holds almost surely with high probability \eqref{eq:probability}) with parameter $\delta>1$. Any normalized initial state $\kpsi$ supported on bitstrings close to a given codeword $\bz$: \begin{equation}\label{eq:initial_state}
        \kpsi = \sum_{\bs:|\bs-\bz|\le E_*/(3q)} a_\bs \ket{\bs}, 
    \end{equation}
    remains trapped in the well $\bz$ forever: for any $t\in \mathbb{R}$, \begin{equation}\label{eq:evolve<exp}
        \norm{ \lr{1-P_\bz}\ee^{-\ii tH} \kpsi}\le \left(\frac{9}{10}\right)^{\mu N/225\Delta}+ 2\sqrt{2}N 2^{-(\delta-1) N}.
    \end{equation} 
\end{prop}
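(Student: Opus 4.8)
The plan is to diagonalize $H$, where the evolution is a pure phase, and to bound three blocks of $\kpsi$ separately. Since $\kpsi$ is supported on bitstrings $\bs$ with $|\bs-\bz|\le E_*/(3q)$, each such $\bs$ has $\alr{H_0}_\bs=|\mathsf{H}(\bs-\bz)|\le q|\bs-\bz|\le E_*/3$, so $\kpsi$ lies in the $H_0$-bins with index $n\le n_0:=\lceil E_*/(3\Delta)\rceil$ and in particular $\norm{H_0'\kpsi}\le E_*/2$ (using $H_0'\preceq\tfrac32H_0$); moreover, for large $N$ — and using $\alpha\le q$, which follows from \eqref{eq:confinement} together with the trivial $|\mathsf{H}\bs|\le q\min_{\bz}|\bs-\bz|$ — the support lies inside the single well $\bz$, so $P_\bz\kpsi=\kpsi$. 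I would fix an orthonormal eigenbasis $H\ket{\phi_\alpha}=E_\alpha\ket{\phi_\alpha}$ with $\kpsi=\sum_\alpha b_\alpha\ket{\phi_\alpha}$, let $Q$ be the spectral projector of $H$ onto $[E_*,\infty)$, and, using that every eigenstate with $E_\alpha<E_*$ is trapped near a codeword $\bz(\phi_\alpha)$ in the sense of \eqref{eq:1-Pz<exp} (Theorem \ref{thm:main}), split
\begin{equation}
\lr{1-P_\bz}\ee^{-\ii tH}\kpsi=\lr{1-P_\bz}\ee^{-\ii tH}\Big[\sum_{\alpha:\,E_\alpha<E_*,\;\bz(\phi_\alpha)=\bz}b_\alpha\ket{\phi_\alpha}+\sum_{\alpha:\,E_\alpha<E_*,\;\bz(\phi_\alpha)\neq\bz}b_\alpha\ket{\phi_\alpha}+Q\kpsi\Big].
\end{equation}

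For the first block, $\ee^{-\ii tH}$ only multiplies $\ket{\phi_\alpha}$ by $\ee^{-\ii tE_\alpha}$, so by the triangle inequality its $(1-P_\bz)$-norm is at most $\sum_\alpha|b_\alpha|\norm{(1-P_{\bz(\phi_\alpha)})\ket{\phi_\alpha}}\le2^N\cdot\sqrt2\,N\ee^{-\delta N}\le\sqrt2\,N\,2^{-(\delta-1)N}$, having bounded the number of eigenstates by $2^N$, used $|b_\alpha|\le1$, \eqref{eq:1-Pz<exp}, and $\ee^{-\delta N}\le2^{-\delta N}$. For the second block, $P_\bz\kpsi=\kpsi$ and $P_{\bz(\phi_\alpha)}P_\bz=0$ give $b_\alpha=\inprod{\phi_\alpha}{\psi}=\matel{\phi_\alpha}{(1-P_{\bz(\phi_\alpha)})}{\psi}$, hence $|b_\alpha|\le\norm{(1-P_{\bz(\phi_\alpha)})\ket{\phi_\alpha}}\le\sqrt2\,N\ee^{-\delta N}$; since the $\ket{\phi_\alpha}$ are orthonormal, this block's norm is $\big(\sum_\alpha|b_\alpha|^2\big)^{1/2}\le\sqrt2\,N\,2^{-(\delta-1)N}$ as well. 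Together these two blocks give the term $2\sqrt2\,N\,2^{-(\delta-1)N}$ of \eqref{eq:evolve<exp}, which decays because $\delta>1$.

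The third block is the crux. Since $Q$ commutes with $\ee^{-\ii tH}$ and $1-P_\bz$ is a contraction, $\norm{(1-P_\bz)\ee^{-\ii tH}Q\kpsi}\le\norm{Q\kpsi}$, which is $t$-independent. A first-moment bound is useless because $\alr{H}_\psi$ is a constant fraction of $E_*$; instead I would use that for every integer $k\ge1$ one has the operator inequality $Q\preceq(H/E_*)^{2k}$ (the right side is $\ge1$ on the range of $Q$ and $\ge0$ everywhere, the power being even and $E_*>0$), so $\norm{Q\kpsi}^2\le E_*^{-2k}\alr{H^{2k}}_\psi=E_*^{-2k}\norm{H^k\kpsi}^2$. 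Writing $H=H_0'+(V+H_{\rm L})$ with $H_0'=H_0+H_{\rm SB}$ diagonal in the $Z$-basis and $\norm{V+H_{\rm L}}\le2\epsilon N$ (from \eqref{eq:V<eps} and $\norm{H_{\rm L}}\le\epsilon N$), and noting that only $V$ moves $H_0$-bins and does so by at most one (by \eqref{eq:V_connect_Delta}), $H$ maps bins $\le m$ into bins $\le m+1$ with norm scaled by at most $\tfrac32 m\Delta+2\epsilon N$; iterating from $\kpsi$ (bins $\le n_0$, $\norm{H_0'\kpsi}\le E_*/2$) gives $\norm{H^k\kpsi}\le(E_*/2+2\epsilon N)\big(\tfrac32(n_0+k-1)\Delta+2\epsilon N\big)^{k-1}$. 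Using $\tfrac32 n_0\Delta\le E_*/2+\tfrac32\Delta$ and $2\epsilon N\le E_*/5$ (from \eqref{eq:eps<N*}, \eqref{eq:E<N*}), both factors are $\le\tfrac9{10}E_*$ provided $k\le\mu N/(225\Delta)$, so $\norm{Q\kpsi}^2\le(9/10)^{2k}$; taking $k=\lfloor\mu N/(225\Delta)\rfloor$ yields $\norm{Q\kpsi}\le(9/10)^{\lfloor\mu N/(225\Delta)\rfloor}$, which is the first term of \eqref{eq:evolve<exp} (the constant $225$ emerging from this balance). Summing the three blocks produces \eqref{eq:evolve<exp}.

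I expect the third block to be the main obstacle: even though $\kpsi$ lies only a constant fraction in energy below $E_*$, its spectral weight strictly above $E_*$ must be shown to be exponentially small, which needs the high-power estimate rather than any low-order moment, and it closes only because $\epsilon$ is small enough (\eqref{eq:eps<N*}) that the ``budget'' $E_*/2+2\epsilon N$ plus the bin-spreading cost $\tfrac32k\Delta$ stays below $\tfrac9{10}E_*$ at $k$ of order $N$. The rest is routine: the elementary facts $\alpha\le q$, $H_0'\preceq\tfrac32H_0$, and the one-step bin locality of $H$; the check that $\kpsi$'s support lies in one well; and assembling the three blocks via the triangle inequality. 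Notably, none of this needs the disorder $H_{\rm L}$ beyond what Theorem \ref{thm:main} already supplied.
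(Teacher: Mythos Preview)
Your proposal is correct and follows essentially the same approach as the paper: split into the high-energy spectral piece $Q\kpsi$ plus the low-energy eigenstate sum, control the latter via \eqref{eq:1-Pz<exp} (separating same-well from different-well eigenstates exactly as you do), and bound $\norm{Q\kpsi}\le E_*^{-k}\norm{H^k\kpsi}$ by iterating the one-bin locality of $H$ to get $\norm{H^k\kpsi}\le(\tfrac{9}{10}E_*)^k$ for $k\sim \mu N/(225\Delta)$. The only differences are cosmetic---you use an $\ell^2$ (orthonormality) bound on the different-well block where the paper uses $\ell^1$, and you count same-well eigenstates by $2^N$ where the paper uses $2^{N-K}$---and your choice $k=\lfloor\mu N/(225\Delta)\rfloor$ gives $(9/10)^{\lfloor\mu N/(225\Delta)\rfloor}$ rather than $(9/10)^{\mu N/(225\Delta)}$, but this off-by-one is harmless (the paper avoids it by taking $k=\lfloor\tfrac{4E_*}{15\Delta}-\tfrac{4\epsilon N}{3\Delta}\rfloor+1$, which is strictly larger than $\mu N/(225\Delta)$).
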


\begin{proof}
According to Theorem \ref{thm:main}, the eigenstates of $H$ with energy $E<E_*$ can be labeled by $\{\ket{\bz',m}\}$, where $\ket{\bz',m}$ is the $m$-th eigenstate trapped in well $\bz'$: \begin{equation}\label{eq:zm_trap}
    \norm{(1-P_{\bz'})\ket{\bz',m}} \le \sqrt{2}N \ee^{-\delta N}.
\end{equation}
Let $\widetilde{P}_>:=1-\sum_{\bz',m} \ket{\bz',m}\bra{\bz',m}$ be the projector onto $E\ge E_*$ eigenstates.

Expanding \begin{equation}
    \ee^{-\ii tH}\kpsi = \ee^{-\ii tH}\widetilde{P}_>\kpsi+ \sum_{\bz',m}a_{\bz',m}\ee^{-\ii tE_{\bz',m}}\ket{\bz',m}
\end{equation} with $a_{\bz',m}=\alr{\bz',m|\psi}$ and using the triangle inequality, we have \begin{align}\label{eq:evolve<highE}
    \norm{ \lr{1-P_\bz}\ee^{-\ii tH} \kpsi} &\le \norm{ \widetilde{P}_>\kpsi}+ \sum_m\norm{ \lr{1-P_\bz}a_{\bz,m} \ket{\bz,m}}+\sum_{\bz'\neq \bz,m}\abs{a_{\bz',m} } \nonumber\\
    &\le \norm{ \widetilde{P}_>\kpsi}+ 2^{N-K}\max_m \norm{ \lr{1-P_\bz} \ket{\bz,m}} + 2^N \max_{\bz'\neq \bz,m} \abs{\alr{\psi|\bz',m}} \nonumber\\
    &\le \norm{ \widetilde{P}_>\kpsi}+ \sqrt{2}N 2^{-\delta N}\lr{2^{N-K} + 2^N} \le \norm{ \widetilde{P}_>\kpsi}+2\sqrt{2}N 2^{-(\delta-1) N}.
\end{align}
where we have used $\norm{1-P_\bz},\norm{\ee^{-\ii tH}}\le 1$ in the first line, and \eqref{eq:zm_trap} together with $\abs{\alr{\psi|\bz',m}} \le \norm{(1-P_{\bz'})\ket{\bz',m}}$ for $\bz'\neq \bz$ in the last line, because $\kpsi$ is not contained in well $\bz'$.

It remains to bound the high-energy contribution, the first term in \eqref{eq:evolve<highE}. Observe that \begin{equation}\label{eq:Hpsi<E}
    \norm{H\kpsi}\le \norm{H_0'\kpsi}+\norm{V'\kpsi}\le \frac{3}{2} \norm{H_0\kpsi} + 2\epsilon N \le \frac{1}{2}E_* + 2\epsilon N,
\end{equation}
where we have used \eqref{eq:V'<eps} for $V'=V+H_{\rm L}$, which is satisfied for the chosen $H$ (see \eqref{eq:prob_HL}). We have also used $H_0'\ge 3H_0/2$ similar to \eqref{eq:H0'>H0}, and that $\kpsi$ given by \eqref{eq:initial_state} is supported in the subspace of energy $E_0\in [0,E_*/3]$ measured by $H_0$, because flipping one qubit violates at most $q$ more checks. Since $H\kpsi$ is supported in the subspace of energy $E_0\le E_*/3 + \Delta$ measured by $H_0$ because $V'$ is $\Delta'$-local, so $\norm{H^2\kpsi}\le \mlr{\frac{1}{2}(E_*+3\Delta) + 2\epsilon N}\norm{H\kpsi}\le \mlr{\frac{1}{2}(E_*+3\Delta) + 2\epsilon N}\lr{\frac{1}{2}E_* + 2\epsilon N}$ similarly as \eqref{eq:Hpsi<E}. Iterating this yields \begin{equation}\label{eq:Hkpsi<}
    \norm{H^k\kpsi} \le \prod_{k'=0}^{k-1} \mlr{\frac{1}{2}\lr{E_*+3\Delta k'} + 2\epsilon N} \le \lr{\frac{9}{10}E_*}^k,
\end{equation}
with \begin{equation}
    k= \left\lfloor\frac{4E_*}{15\Delta} -\frac{4\epsilon N}{3\Delta}\right\rfloor+1\ge \frac{\mu N}{3\Delta}\lr{\frac{4}{150}-\frac{4}{300}} = \frac{\mu N}{225\Delta},
\end{equation}
where we have used \eqref{eq:eps<N*}.
On the other hand, $\norm{H^k\kpsi}\ge \norm{\widetilde{P}_>H^k \kpsi}=\norm{H^k \widetilde{P}_>\kpsi}\ge E_*^k \norm{\widetilde{P}_>\kpsi}$, so \eqref{eq:Hkpsi<} yields \begin{equation}
    \norm{\widetilde{P}_>\kpsi} \le \left(\frac{9}{10}\right)^k \le \left(\frac{9}{10}\right)^{\mu N/(225\Delta)}.
\end{equation}
Combining this with \eqref{eq:evolve<highE} leads to \eqref{eq:evolve<exp}.
\end{proof}

We remark in passing that our methods require a tighter bound on $\epsilon$ to prove this freezing of \emph{arbitrary} quantum states near the bottom of a well.  Whether there is a physical regime of eigenstate localization, yet delocalization of a typical initially localized state, could be an interesting question to explore in future work.

\subsection{Beyond c3LTCs}\label{app:extend}

 We conclude by explaining why the c3LTC is a technically convenient, though not essential, ingredient in our analysis. We now consider a more general parity check matrix $\mathsf{H}$ associated with an LDPC code, which may even be non-redundant.  Suppose that $\mathsf{H}$ has \emph{linear confinement}, which implies that for all $|\mathbf{x}|\le \gamma N$ with a $\mathrm{O}(1)$ constant $\gamma$, \begin{equation}
     |\mathsf{H}\mathbf{x}| \ge \alpha |\mathbf{x}|. \label{eq:confine}
 \end{equation}
It is useful to shift $\gamma$ by $\mathrm{O}(N^{-1})$ to ensure $\gamma N$ is an integer. By linearity, (\ref{eq:confine}) holds for $\mathbf{x}$ within $\gamma N$ Hamming distance of any codeword.  Moreover, the same property holds for any possible ``sick configuration" of sufficiently low energy density: given linear confinement (\ref{eq:confine}) near a codeword, for any $\mathbf{x}$ which obeys \begin{equation}\label{eq:Hx<gamma}
        |\mathsf{H}\mathbf{x}| \le \frac{\alpha \gamma}{4} N-1,
    \end{equation}
    then for all $\mathbf{y}$ obeying $ |\mathbf{y} | \le \gamma N$, \begin{equation}\label{eq:Hx+y>Hy}
        |\mathsf{H}(\mathbf{x}+\mathbf{y})| \ge |\mathsf{H}\mathbf{y}| - |\mathsf{H}\mathbf{x}| \ge \alpha \left(|\mathbf{y}| - \frac{|\mathsf{H}\mathbf{x}|}{\alpha}\right) \ge \alpha \left(|\mathbf{y}| -\frac{\gamma N}{4}\right)+1.
    \end{equation}

Around any bitstring $\tilde{\bz}$ at low-energy $|\mathsf{H}\tilde{\bz}| \le \alpha\gamma N/4-1$, there is a subspace containing bitstrings of the form $\tilde{\bz}+\mathbf{y}$ with $\gamma N/2\le |\mathbf{y}|\le \gamma N$, such that all states saturating this inequality have a high number of flipped parity checks $\ge N_*$, where we now define \begin{equation}
     N_*:= \left\lfloor \frac{\alpha \gamma N}{4}\right\rfloor. \label{eq:Nstarnew}
\end{equation}
We define projector \begin{equation}
    P_{\tilde{\bz}} = \sum_{\bs:|\bs-\tilde{\bz}|\le \gamma N/2, |\mathsf{H}\mathbf{s}| \le  N_*-1  } \ket{\bs} \bra{\bs},
\end{equation}
that projects onto a subspace labeled by the ``well" $\tilde{\bz}$ (for this subsection, we will end up replacing codeword with well).  Without loss of generality, we henceforth choose $\tilde{\bz}$ to be a configuration with as few parity checks flipped as possible, for each well.  For the region $|\bs-\tilde{\bz}|> \gamma N/2$ outside, we can find another low-energy $\tilde{\bz}'$ and define its corresponding well. Because starting at the bottom of well $\tilde{\bz}$, we know that all states a distance between $\gamma N/2$ and $\gamma N$ away from $\tilde{\bz}$ have a large number of flipped parity checks $\ge N_*$, clearly no two wells, which are restricted to states with at most $N_*-1$ flipped parity checks, will ever overlap.  Therefore, the wells  
cannot be connected by any $\Delta'$-local $V'$: $P_{\tilde{\bz}}V' P_{\tilde{\bz}'}=0$.

Repeating this process yields a set of wells $\tilde{\bz}$ that include \emph{all} low energy configurations, which are connected only through high energy configurations.  Any state obeying (\ref{eq:Hx<gamma}) belongs to a unique well with a macroscopic energy barrier. Define
\begin{equation}
    P_> := 1 - \sum_{\text{well }\tilde{\mathbf{z}}}P_{\tilde{\bz}},
\end{equation}
which projects onto the remaining bitstrings that all satisfy $|\mathsf{H}\bs|\ge N_*$, while (by the construction above) all bit strings with $|\mathsf{H}\bs|<N_*$ necessarily belong to one of the $P_{\tilde{\bz}}$. 

At this point, we now follow the proof of Theorem \ref{thm:main} verbatim, upon replacing $N_*$ with (\ref{eq:Nstarnew}), to deduce that all sufficiently low energy eigenstates are trapped very close to the bottom of a single well.  Notice that in some wells, the smallest value of $n$ that exists in the decomposition (\ref{eq:psi=psin}) may be close to $N_*$, but this does not actually change our arguments. The O(1) constants which have changed include $\mu$ (due to (\ref{eq:Nstarnew})); we also must take a larger value of $\delta$ to account for the fact that there are $\gg 2^K$ wells (but $\ll 2^N$); from (\ref{eq:leak}), shifting $\delta \rightarrow \delta + \frac{1}{2}$ suffices to account for this increased number of low energy wells, while maintaining the exponential localization of all low-energy eigenstates to a single well. In turn, these modifications lead to a more stringent bound on $\epsilon$ at which we can prove localization; of course, $\epsilon$ is still O(1). 

\end{document}